\documentclass{article}

\usepackage{PRIMEarxiv}

\usepackage[utf8]{inputenc} 
\usepackage[T1]{fontenc}    
\usepackage{hyperref}       
\usepackage{url}            
\usepackage{booktabs}       
\usepackage{amsfonts}       
\usepackage{nicefrac}       
\usepackage{microtype}      
\usepackage{lipsum}
\usepackage{fancyhdr}       
\usepackage{graphicx}       
\graphicspath{{media/}}     
\usepackage{balance} 

\usepackage{natbib}
\usepackage{makecell}
\usepackage{etoolbox}                     
\usepackage{hyperref}     
\usepackage{amsthm}
\usepackage{amsmath}
\usepackage{amssymb}
\usepackage[nameinlink,noabbrev,capitalise]{cleveref}

\usepackage{subcaption}

\theoremstyle{remark}
\newtheorem*{remark}{Remark}

\newcommand{\AWNS}{\textsc{AWNS}}
\newcommand{\AWNSD}{\textsc{AWNS}\mbox{-}\textsc{d}}
\newcommand{\AWNSDC}{\textsc{AWNS}\mbox{-}\textsc{d}\mbox{-}\textsc{c}}
\newcommand{\AWNSR}{\textsc{AWNS}\mbox{-}\ensuremath{\rho}}
\newcommand{\AWNSRC}{\textsc{AWNS}\mbox{-}\ensuremath{\rho}\mbox{-}\textsc{c}}
\newcommand{\AWNSNW}{\textsc{AWNS}\mbox{-}\textsc{NW}}

\newcommand{\appsymb}{$\bigstar$}
\newcommand{\appref}[1]{{\hyperref[proof:#1]{\appsymb}}}
\newcommand{\apprefX}[1]{{\hyperref[#1]{\appsymb}}}

\newcommand{\Rof}[1]{\ensuremath{R_{[#1]}}}

\usepackage{xspace}
\usepackage{xcolor}
\definecolor{tuerkis}{RGB}{64, 224, 208}
\usepackage{etoolbox}
\usepackage{marginnote}
\marginparwidth=\dimexpr \marginparwidth + 0.5cm\relax
\usepackage[textsize=scriptsize]{todonotes} 

\newtheorem{definition}{Definition}
\newtheorem{example}{Example}
\newtheorem{proposition}{Proposition}
\newtheorem{theorem}{Theorem}
\newtheorem{lemma}{Lemma}
\newtheorem{corollary}{Corollary}
\newtheorem{claim}{Claim}

\title{Adjusted Winner: from Splitting to Selling
}

\author{
  Robert Bredereck, Bin Sun \\
  Institut für Informatik \\
  TU Clausthal \\
  Clausthal-Zellerfeld, Germany\\
  \texttt{\{robert.bredereck,bin.sun\}@tu-clausthal.de} \\
   \And
  Eyal Briman, Nimrod Talmon \\
  Department of Industrial Engineering and Management \\
  Ben-Gurion University of the Negev \\
  Beer Sheva, Israel\\
  \texttt{briman@post.bgu.ac.il,talmonn@bgu.ac.il} \\
}

\newcommand{\BibTeX}{\rm B\kern-.05em{\sc i\kern-.025em b}\kern-.08em\TeX}

\newcommand{\mytodo}[2]{\todo[size=\tiny, color=#1!50!white]{#2}\xspace}
\newcommand{\myrevtodo}[2]{{%
		\let\marginpar\marginnote
		\reversemarginpar
		\renewcommand{\baselinestretch}{0.8}%
		\todo[size=\tiny, color=#1!50!white]{#2}\xspace}}
\newcommand{\myinlinetodo}[2]{\todo[size=\small, color=#1!50!white, inline, caption={}]{#2}\xspace}
\newcommand{\registerAuthor}[3]{%
	\expandafter\renewcommand\csname #2com\endcsname[1]{\mytodo{#3}{\textsc{#2}:
	##1}}%
	\expandafter\renewcommand\csname
	#2revcom\endcsname[1]{\myrevtodo{#3}{\textsc{#2}: ##1}}%
	\expandafter\renewcommand\csname
	#2inline\endcsname[1]{\myinlinetodo{#3}{\textsc{#2}: ##1}}%
	\expandafter\renewcommand\csname
	#2inlineLater\endcsname[1]{\lv{\myinlinetodo{#3}{\textsc{#2}: ##1}}}%
}
\registerAuthor{Robert}{rb}{yellow}
\registerAuthor{Eyal}{eb}{green!50!gray}
\registerAuthor{Bin}{bs}{tuerkis}
\registerAuthor{Nimrod}{nt}{orange!50!white}

\begin{document}


\pagestyle{fancy}
\fancyhead{}

\maketitle

\begin{abstract}
The \emph{Adjusted Winner} (AW) method is a fundamental procedure for the fair division of \emph{indivisible} resources between two agents. However, its reliance on splitting resources can lead to practical complications. To address this limitation, we propose an extension of AW that allows the sale of selected resources under a budget constraint, with the proceeds subsequently redistributed, thereby aiming for allocations that remain as equitable as possible. Alongside developing this extended framework, we provide an axiomatic analysis that examines how equitability and envy-freeness are modified in our setting. We then formally define the resulting combinatorial problems, establish their computational complexity, and design a fully polynomial-time approximation scheme (FPTAS) to mitigate their inherent intractability. Finally, we complement our theoretical results with computer-based simulations.
\end{abstract}

\keywords{Computational Social Choice \and Fair Division}


\section{Introduction}

The \emph{Adjusted Winner} (AW) method~\cite{brams1996fair} is a foundational protocol for fair division between two agents. It assumes that each agent has a confidential additive utility function $u_1, u_2: \mathcal{P}(R) \to \mathbb{N}_0$ over a finite set of indivisible resources $R$, and is facilitated by a neutral mediator. The agents' utility functions are defined on a common cardinal scale,
i.e., $\sum_{r\in R} u_1(r) = \sum_{r\in R} u_2(r)$, so that their utilities are directly comparable. The allocation proceeds in two phases:

\begin{description}
  \item [Phase 1 (Initial Allocation):] Each resource \( r \in R \) is allocated to the agent who values it more: if \( u_1(r) > u_2(r) \),
  then \( r \) is assigned to Agent~1; otherwise, it is assigned to Agent~2. This results in a tentative allocation \((S_1, S_2)\).
  \item [Phase 2 (Adjustment):]  
  To equalize the total utilities of the two agents, the algorithm identifies
  the agent who currently has the higher total utility (the \emph{advantaged}
  agent~$a$) and considers transferring resources from agent~$a$ to the other agent~$b$.  
  The resources owned by the advantaged agent are sorted in non-decreasing
  order of the ratio \(\frac{u_a(r)}{u_b(r)}\).  
  Resources are then transferred in this order until the agents’ total utilities
  become equal, \(u_1(S_1)=u_2(S_2)\).  
  If exact equality cannot be achieved by transferring whole resources, the
  last transferred resource is divided fractionally between the agents; this
  fractional transfer is referred to as a \emph{split}.
\end{description} 

Despite its prominence,\footnote{\url{https://pages.nyu.edu/adjustedwinner/}}
 some theoretical properties and computational aspects of AW remain relatively under-explored~\cite{aziz2015adjusted,salame2005some,massoud2000fair,pacuit2011towards,dupuis2009empirical,brams1997fair}. The method is especially notable for guaranteeing \emph{equitable}, \emph{envy-free}, and \emph{Pareto-optimal} allocations under additive valuations. These strong fairness properties, however, crucially depend on the ability to divide one (arbitrary) indivisible resource. Indeed, the possibility of splitting a contested resource is fundamental to AW's guarantees, which can render the method less applicable in settings where such divisions are infeasible or undesirable. 


Our study addresses the practical infeasibility of resource splitting by
permitting the sale of selected resources, subject to a global budget and
resource-specific costs that capture the logistical or emotional burdens
involved. We deliberately retain the Adjusted Winner (AW) framework
because it is both simple and transparent, allowing parties to clearly
understand how the allocation is determined.  Yet the reliance of AW on fractional division is often unrealistic.
For instance, a shared rented apartment or a sentimental family heirloom cannot be literally divided, and even selling such resources may involve substantial emotional or procedural costs.
Our extension preserves the intuitive and user-friendly nature of AW while replacing fractional splits with resource sales, maintaining its transparency while accommodating real-world constraints.


\paragraph{Optimizing Equitability under Indivisibility Constraints.}
Instead of maintaining a formal guarantee of equitability at the cost of ignoring the indivisibility of resources, as in the classic AW method, we explicitly model indivisibility and aim to optimize equitability within feasible bounds.
While exact equitability may still be attainable in special cases, we do not enforce it when it would require impractical or infeasible splits. The framework (which can be seen as an extension of AW) permits the sale of multiple resources under a mediator-defined global budget, allowing equitability to be optimized subject to indivisibility, selling costs, and potential revenue from sold resources.

\paragraph{Welfare from Selling.}
Allowing the sale of resources naturally requires integrating both the
agents' utilities for allocated resources and the revenue generated from
selling resources. Accordingly, we define each agent's welfare as the sum
of the utility they obtain from their allocated resources and the share of
the revenue they receive from sold resources. Once a plan specifying which
resources are allocated and which are sold has been determined, the total
revenue from the sold resources is divided between the two agents so as to
make their resulting welfares as balanced as possible. 

Although agents cannot resell the resources they receive within our
setting, it is reasonable to expect that, if a resale were to occur later
on the open market, the obtainable revenue would not exceed the higher valuation among the two agents.
 At the same time, in many practical settings such as urgent liquidations, private exchanges, or emotionally constrained sales, market transactions tend to under-capture value. Therefore, we assume that the revenue obtainable from selling any resource does not exceed the higher utility that either agent assigns to it.

\paragraph{Cost of Selling.}
In this framework, selling a resource comes with its own cost, which might reflect the effort, time, or extra fees required to complete a sale. There is an overall limit (budget) on how much can be spent on selling. Importantly, these selling costs are of a completely different nature from the benefit or satisfaction the agents receive from the resources themselves.  For example, when dividing up property during a divorce, selling a car or a house often involves a lot of paperwork, legal procedures, and sometimes additional expenses.

A particularly relevant special case arises when costs are restricted to
\(\{0, \infty\}\) or to unary values. The former allows one to designate certain
resources as unsellable (by assigning an infinite cost) or freely sellable
(with zero cost), while the latter effectively limits the number of resources
that can be sold. These special cases make the framework flexible enough
to capture both continuous and discrete forms of selling constraints.

\paragraph{Positive Welfare.}
Since in our setting all utilities and prices are nonnegative, it is straightforward to see that both agents obtaining zero welfare constitutes a trivial yet uninformative form of fairness. To exclude such degenerate cases, we require that any feasible plan must yield strictly positive welfare for both agents.

\paragraph{Our contributions.}
We introduce the \emph{Dispute Settlement with Indivisible Resources and Sale (DSIRS)} framework, which provides a principled foundation for resolving disputes in contexts where resources cannot be divided, either for practical or emotional reasons. Within this framework, we formalize a family of Adjusted Winner without Splitting (AWNS) combinatorial optimization problems and analyze both their computational complexity and fairness properties. In particular, we establish an FPTAS for the AWNS-$\rho$ problem, a parameterized variant in which $\rho$ measures the ratio between the agents’ achieved welfare levels. Finally, we complement our theoretical results and modeling by computer-based simulations.

\section{Related Work}
We briefly survey prior work related to fairness, indivisibility, and conflict resolution in fair division problems.


\paragraph{Control in Allocation and Elections.}
Modifying the availability of resources to influence outcomes is conceptually related to control in voting, where agents alter the set of candidates or voters to achieve their preferred result~\cite{bartholdi1992hard,faliszewski2016control,aziz2016control}. In fair division, similar ideas arise when resources are pre-selected, withheld, or reweighted to affect final allocations~\cite{aziz2016control}. Our approach can be interpreted as a constrained form of resource selling aimed at preventing splits rather than optimizing strategic advantage.

\paragraph{Fair Division as Conflict Resolution.}
Fair division mechanisms have long been used as tools for conflict resolution, particularly in situations where indivisibilities lead to impasses~\cite{brams1996fair,bouveret2016characterizing,kilgour2024two}. In such cases, modifying the problem instance---e.g., by deleting or monetizing contentious resources---can facilitate compromise. Our work formalizes this intervention-based approach within the AW framework, providing theoretical and computational guarantees for fairness- and efficiency-oriented resource selection strategies.

\paragraph{Resource Sale (Donation) and Envy-Freeness.}
Bilò et al.~\cite{bilo2024achieving} consider achieving envy-freeness by selling resources and redistributing proceeds. Their work focuses on general allocation frameworks and provides a PTAS for two agents. Our model differs in that it retains the AW structure and leverages resource sale not for reallocation per se, but to satisfy constraints that preclude splits, thereby preserving properties of the original AW method under additional feasibility constraints. \citet{ChaudhuryKMS21} study the existence of EFX allocations for agents with general valuations and show that donating a small number of non-envied resources to charity, which can be seen as a degenerate case of a sale with zero payment, suffices to ensure the existence of such allocations.

\paragraph{Fairness Guarantees with Indivisible Resources.}  
Since we consider all resources to be truly indivisible---unlike in the Adjusted Winner method---we cannot guarantee equitability, envy-freeness and Pareto optimality.
This limitation is not unique to our approach: it is intrinsic to the setting. As emphasized in recent algorithmic surveys on discrete fair division~\cite{aziz2022algorithmic,amanatidis2022fair}, exact fairness guarantees such as envy-freeness and proportionality become infeasible once resources are indivisible, motivating relaxations such as envy-freeness up to one resource (EF1)~\cite{aleksandrovenvy,plaut2020almost}. Concurrently, empirical studies of family property and inheritance mediation report that disputants and neutrals prioritize enforceable whole-resource outcomes, confidentiality, and efficiency over fractional allocations or theoretically ideal fairness~\cite{folberg2009mediating}. Guided by this evidence, our model deliberately avoids unattainable guarantees and instead promotes a policy of minimal inequality: it ensures that the final allocation consists only of indivisible resources, and permits resource sales under predefined constraints to reduce disparity between the agents. \citet{ChakrabortyISZ21} propose another AW–style algorithm that respects indivisibility and guarantees weighted envy-freeness up to one item (EF1) for agents with arbitrary entitlements.

\paragraph{Divisible Revenue}
Halpern and Shah~\cite{HalpernS19} proposed achieving envy-freeness through external monetary subsidies, where a third party provides additional divisible resources to offset fairness violations. Relatedly, more recent work has studied conceptually equivalent mechanisms under different names, such as subsidies or dollars (money)~\cite{CaragiannisI21,BrustleDNSV20}. Our work is conceptually complementary: instead of adding external money, we restore balance by selling certain resources to generate divisible revenue within the system.

\section{Preliminaries}
In this section, we formally introduce the DSIRS framework for allocating indivisible resources with the option of selling some resources and examine the new challenges that arise under the no-split constraint. These considerations motivate a family of optimization problems that we study in the subsequent sections. 

Throughout this section and the remainder of the paper, proofs of selected results are omitted from the main text and provided in the full version.

\subsection{Framework (DSIRS)}
We begin with a commonly encountered scenario and develop a framework based on it. Two agents, Agent~1 and Agent~2, seek to allocate their indivisible assets.  We call this framework \textit{Dispute Settlement with Indivisible Resources and Sale (DSIRS)}. 
\begin{definition}[DSIRS]

An instance of \textit{Dispute Settlement with Indivisible Resources and Sale (DSIRS)} $\mathcal{I}=\langle R,u_1,u_2,p,c,B\rangle$ consists of:

\begin{itemize}
    \item a finite set of indivisible resources, $R = \{r_1, r_2, \dots, r_m\}$,
    \item an additive utility function $u_j: R \to \mathbb{N}_0$ for each agent $j \in \{1,2\}$, s.t. $\sum_{r\in R}u_1(r)=\sum_{r\in R}u_2(r)$, 

    \item an additive price (revenue) function $p: R \to \mathbb{N}_0$, where $p(r)$ denotes the revenue generated from selling $r$, s.t. $p(r) \leq \max\{ u_1(r), u_2(r)\}$, 
    
         \item an additive selling cost function $c: R \to \mathbb{N}_0$,
           \item a budget for total cost $B \in \mathbb{N}_0$. 
\end{itemize}

\begin{remark}
    By additivity, we denote $u_j(S) = \sum_{r \in S} u_j(r)$ for all $S \subseteq R$ and $j \in \{1,2\}$; $p(S) = \sum_{r \in S} p(r)$ for all $S \subseteq R$;  $c(S) = \sum_{r \in S} c(r)$ for all $S \subseteq R$. For convenience, when the indices of the resources are fixed and clear from the context, we denote $u_j(r_i)$, $p(r_i)$ and $c(r_i)$ by $u_{ij}$, $p_i$ and $c_i$, respectively.
\end{remark}
\end{definition}

Having defined an instance of DSIRS, we now specify what constitutes a solution. Every resource is required to be in exactly one of three mutually exclusive subsets: $S_0$, $S_1$, and $S_2$, where $S_0$ contains the resources sold to generate divisible funds, and $S_1$ and $S_2$ contain the resources allocated to agents 1 and 2, respectively. A tuple \(\mathcal{S(I)}=\langle S_0,S_1,S_2\rangle\) is called a \emph{plan (solution)} of an instance of DSIRS $\mathcal{I}=\langle R,u_1,u_2,p,c,B\rangle$. When the instance is clear from context, we write $\mathcal{S}$. The welfare functions for a plan $\mathcal{S} = \langle S_0, S_1, S_2 \rangle$
in instance $\mathcal{I}$ are defined as
\[
W_1(\mathcal{S}, \mathcal{I}) = u_1(S_1) + q \cdot p(S_0), \qquad
W_2(\mathcal{S}, \mathcal{I}) = u_2(S_2) + (1 - q) \cdot p(S_0),
\]
where the fraction $q \in [0,1]$ specifies how the revenue from the sold
resources is divided between the two agents. For any fixed
$S_0, S_1, S_2$, this fraction is uniquely determined by requiring the
two agents' welfares to be as balanced as possible. Formally,
\[
q = \max\!\left( 0,\ \min \!\left( 1,\ 
  \frac{p(S_0) - u_1(S_1) + u_2(S_2)}{2\,p(S_0)} \right) \right).
\]
Thus, $q$ is not a decision variable of the plan but a derived quantity
that ensures minimal welfare imbalance given the chosen sets.

\begin{remark}
\( W_i(\mathcal{S},\mathcal{I}) \) denotes the welfare that Agent~\( i \) obtains under plan~\( \mathcal{S} \) for instance~$\mathcal{I}$.   When the instance is clear but multiple plans are considered, we write $W_i(\mathcal{S})$ for the welfare of Agent~$i$ under plan~$\mathcal{S}$.  
When both the instance and plan are clear from context or when referring to the welfare of Agent~$i$ in general, we write \( W_i \) for brevity.
 Unless otherwise specified, the value of $q$ is determined by the choice of $S_0,S_1,S_2$ as described above. In certain special cases, we write $\mathcal{S(I)}$ as $\langle S_0, S_1, S_2, q'\rangle$, where the parameter $q$ is fixed to the given value $q'$.
 \end{remark}

A plan $\mathcal{S}$ is called \emph{feasible} if 
$W_1(\mathcal{S}), W_2(\mathcal{S}) > 0$ and 
$ c(S_0) \le B$.
That is, both agents obtain strictly positive welfare, and 
the total cost associated with selling resources does not exceed 
the budget~$B$. 
The quantity $c(S_0)$ is also referred to as the \emph{organizational cost}.
In the following, we restrict our attention to feasible plans.

\subsection{Fairness Analysis}
Here, we investigate some fairness properties of the DSIRS framework. We introduce two equitability thresholds: (a) welfare difference $d = |W_1 - W_2|$ and (b) welfare ratio $ \rho=\max\{\frac{W_1}{W_2},\frac{W_2}{W_1}\}$. The following proposition shows that, for any fixed partition $S_0, S_1, S_2$ of the resources, the same choice of $q$ simultaneously minimizes both objectives.

\begin{proposition}
Let $\mathcal{I}$ be an instance of DSIRS and let $S_0, S_1, S_2$ be a partition of the resources~$R$.
Define
\[
q_1 = \arg\min_{q' \in [0,1]} \left| u_1(S_1) - u_2(S_2) + (2q'-1)\cdot p(S_0) \right|,
\]
\[
q_2 = \arg\min_{q' \in [0,1]} \max \left\{\frac{W_1'}{W_2'},\frac{W_2'}{W_1'}
\right\},
\] where $\frac{W_1'}{W_2'}= \frac{u_1(S_1) + q'\cdot p(S_0)}{u_2(S_2) + (1-q')\cdot p(S_0)}$. Then $q_1 = q_2$.
\label{easysimult}
\end{proposition}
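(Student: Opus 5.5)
Write $a = u_1(S_1)$, $b = u_2(S_2)$ and $P = p(S_0)$. For $q' \in [0,1]$ set $W_1'(q') = a + q'P$ and $W_2'(q') = b + (1-q')P$.

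Their sum is the constant $T = a + b + P$. Their difference is $D(q') = a - b + (2q'-1)P$, which is non-decreasing and affine in $q'$.

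The plan is to show that both objectives are non-decreasing functions of the single quantity $|D(q')|$. Then they have the same minimizers on $[0,1]$, and we can identify that minimizer explicitly. It is the clipped value $q^* = \max(0,\min(1,(P-a+b)/(2P)))$ used in the definition of $q$.

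\textbf{Step 1: the difference objective.} If $P = 0$, every $q'$ gives the same value, so both argmins are all of $[0,1]$ and there is nothing to prove. This degenerate case has to be stated separately and read as equality of the argmin sets.

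If $P > 0$, then $|D(q')|$ is the absolute value of a strictly increasing affine function. Its unique minimizer on $[0,1]$ is the root $q_0 = (P-a+b)/(2P)$ when $q_0 \in [0,1]$. Otherwise it is the endpoint nearer to $q_0$. This gives $q_1 = q^*$, and the minimizer is unique.

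\textbf{Step 2: the ratio objective.} Because $W_1' + W_2' = T$, we have $W_1' = (T+D)/2$ and $W_2' = (T-D)/2$. Hence
\[
\max\left\{\frac{W_1'}{W_2'},\frac{W_2'}{W_1'}\right\} = \frac{T+|D|}{T-|D|},
\]
whenever both welfares are positive. The map $x \mapsto (T+x)/(T-x)$ is strictly increasing on $[0,T)$. So on the region where both welfares are positive, minimizing the ratio is the same as minimizing $|D(q')|$, and that minimizer is $q^*$ by Step 1.

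\textbf{Main obstacle: zero welfare.} The delicate step is handling values of $q'$ where one welfare is $0$, so that the ratio is undefined or infinite. I would adopt the convention $x/0 = +\infty$ for $x > 0$. Then such $q'$ are never strictly better than any $q'$ giving both welfares positive.

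I also need $q^*$ itself to give both welfares positive whenever some $q'$ does. This follows because $|D(q^*)| \le |D(q')|$ for every $q'$, so $T - |D(q^*)| \ge T - |D(q')| > 0$, and then both $W_1'(q^*) = (T + D(q^*))/2$ and $W_2'(q^*) = (T - D(q^*))/2$ are positive.

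If no $q'$ gives both welfares positive, the ratio is infinite or undefined everywhere. That case must be excluded or treated by convention, so I would note it as outside the scope of feasible plans. Under this convention $q_2 = q^* = q_1$, which completes the argument.
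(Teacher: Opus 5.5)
Your proposal is correct and follows the same idea as the paper. Because $W_1'+W_2'$ is fixed, the max-ratio is an increasing function of the gap $|D(q')|$; the paper expresses this through $\frac{A+\epsilon}{B-\epsilon}\ge\frac{A}{B}$, and you make it explicit as $\frac{T+|D|}{T-|D|}$, whose strict monotonicity also gives uniqueness of the argmin, while you additionally handle the $P=0$ and zero-welfare cases that the paper leaves implicit.
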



{

\begin{proof}
Since $S_0$, $S_1$, and $S_2$ are fixed for $\mathcal{I}$, the total welfare $u_1(S_1)+u_2(S_2)+p(S_0)$ is also fixed.  Observe that for any $A \geq B > 0$ and any $\epsilon \geq 0$, it holds that
\[
\frac{A+\epsilon}{B-\epsilon} \geq \frac{A}{B}.
\] Thus, minimizing $\left| u_1(S_1) - u_2(S_2) + (2q-1)\, p(S_0) \right|$ with respect to $q$ also minimizes $\max \left\{
    \frac{u_1(S_1) + q\cdot p(S_0)}{u_2(S_2) + (1-q)\cdot p(S_0)},
    \frac{u_2(S_2) + (1-q)\cdot p(S_0)}{u_1(S_1) + q\cdot p(S_0)}
\right\}$.
By symmetry, the same argument applies when $B \ge A > 0$ and any $\varepsilon \ge 0$.
  \end{proof}
}

\begin{corollary}
\label{easydrho}
For any plan $\mathcal{S}$,  $W_1(\mathcal{S})-W_2(\mathcal{S})=0$ if and only if $\frac{W_1(\mathcal{S})}{W_2(\mathcal{S})}=1$.
\end{corollary}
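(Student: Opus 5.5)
The plan is to argue directly from the definitions, using the feasibility requirement that both welfares are strictly positive. Since we restrict attention to feasible plans, $W_2(\mathcal{S}) > 0$, so the ratio $\frac{W_1(\mathcal{S})}{W_2(\mathcal{S})}$ is well defined. The whole argument then rests on the elementary equivalence $a - b = 0 \iff a/b = 1$ for real $a$ and $b \neq 0$.

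For the forward direction, I will assume $W_1(\mathcal{S}) - W_2(\mathcal{S}) = 0$. Then $W_1(\mathcal{S}) = W_2(\mathcal{S})$, and dividing both sides by $W_2(\mathcal{S}) > 0$ gives $\frac{W_1(\mathcal{S})}{W_2(\mathcal{S})} = 1$.

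For the reverse direction, I will assume $\frac{W_1(\mathcal{S})}{W_2(\mathcal{S})} = 1$. Multiplying by $W_2(\mathcal{S})$ yields $W_1(\mathcal{S}) = W_2(\mathcal{S})$, so the difference is zero.

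It may also be worth noting that both conditions are equivalent to $d = 0$ and to $\rho = \max\{\frac{W_1}{W_2},\frac{W_2}{W_1}\} = 1$, which ties the corollary to Proposition~\ref{easysimult}. That proposition shows the same $q$ minimizes both measures, so the optimal plans with $d = 0$ and with $\rho = 1$ coincide.

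The only real obstacle is ensuring the denominator is nonzero. Positivity of $W_1$ and $W_2$ is exactly what feasibility guarantees, so I will invoke it explicitly; no further computation is needed.
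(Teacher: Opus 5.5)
Your proof is correct. The paper gives no separate argument and simply places the statement as a corollary of Proposition~\ref{easysimult}, whereas you prove it directly from the definitions. The direct argument is all that is logically needed: the equivalence $W_1-W_2=0 \iff W_1/W_2=1$ uses only $W_2(\mathcal{S})\neq 0$, and you correctly obtain this from the feasibility requirement $W_1(\mathcal{S}),W_2(\mathcal{S})>0$.

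Your route has the advantage of stating explicitly the one hypothesis the paper leaves implicit; without it the ratio is undefined, e.g.\ when $W_1=W_2=0$. Proposition~\ref{easysimult} concerns the choice of $q$ for a fixed partition and plays no role in this plan-by-plan equivalence.

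One small correction to your closing aside. The fact that the plans with $d=0$ are exactly those with $\rho=1$ follows from the equivalence you just proved, together with $\max\{x,1/x\}=1 \iff x=1$ for $x>0$. It does not follow from Proposition~\ref{easysimult}. Away from the values $d=0$ and $\rho=1$, minimizers of $d$ and of $\rho$ can differ, as Proposition~\ref{conflicts3} shows.
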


The above proposition shows that, for any \emph{fixed} partition $S_0, S_1, S_2$, the same choice of $q$
simultaneously minimizes both the welfare difference $d$ and the welfare ratio $\rho$. In contrast, when the
partition itself is part of the decision (i.e., $S_0, S_1, S_2$ are not given but chosen to optimize the objective),
the optimal partitions for minimizing $d$ and minimizing $\rho$ may differ. The following proposition formalizes this observation.

\begin{proposition}\label{conflicts3}
Given an instance of DSIRS, a plan $\mathcal{S}$ that minimizes $d$ may not minimize $\rho$, and vice versa.
\end{proposition}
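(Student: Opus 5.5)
The plan is to prove the proposition by explicit counterexample. I would construct a small DSIRS instance in which selling is switched off, for example by setting $B=0$ and $c(r)\ge 1$ for all $r$. Then $S_0=\emptyset$ in every feasible plan, $q$ plays no role, and $W_1=u_1(S_1)$, $W_2=u_2(S_2)$. With $p\equiv 0$ the price constraint $p(r)\le\max\{u_1(r),u_2(r)\}$ holds trivially. This reduces the statement to a purely combinatorial question about two-agent allocations of indivisible items, where feasibility means only $u_1(S_1)>0$ and $u_2(S_2)>0$.

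The target shape comes from the fact that $d$ is scale-sensitive and $\rho$ is not. I want two feasible plans $\mathcal{S}^d$ and $\mathcal{S}^\rho$ with welfare pairs of the form $(1,2)$ and $(10,12)$. The first has $d=1$, $\rho=2$; the second has $d=2$, $\rho=1.2$. So the $d$-minimizer would have strictly worse $\rho$, and the $\rho$-minimizer strictly worse $d$. Both directions of the ``vice versa'' would then follow from the same instance, once it is verified that
\begin{itemize}
\item no feasible plan has $d<1$ (in particular none has $d=0$, which by Corollary~\ref{easydrho} would also have $\rho=1$ and minimize both);
\item no feasible plan has $\rho<1.2$.
\end{itemize}

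The main obstacle is realizing such welfare pairs while respecting $\sum_r u_1(r)=\sum_r u_2(r)$ and excluding every other allocation, since all items must be assigned. Two items do not suffice: with only two feasible allocations, equal totals force both to have the same $d$. I would therefore use three or four items, mixing items valued by only one agent with items valued differently by both. The utilities should be chosen so that the achievable value sets $\{u_1(S_1)\}$ and $\{u_2(S_2)\}$ avoid near-equal pairs except for the two target pairs. The verification is then a finite enumeration of the $2^m$ allocations, discarding those with a zero welfare and tabulating $d$ and $\rho$ for the rest.

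If tuning a no-sale instance proves awkward, the fallback is to allow one sellable item with positive price and cost within budget. Its divisible revenue lets one plan reach exact balance at a low welfare level, while another plan keeps a small ratio at a high level. The same scale-based reasoning and the same enumeration check apply, now computing $q$ from the formula in the definition of welfare.
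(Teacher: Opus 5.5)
\textbf{There is a genuine gap.} Your proposal never exhibits an instance, and the instance you aim for cannot exist under your own sale-free restriction. With $S_0=\emptyset$ forced, let $M=u_1(R)=u_2(R)$. Swapping the two bundles of any plan with welfares $(W_1,W_2)$ gives the plan with welfares $(u_1(S_2),u_2(S_1))=(M-W_1,M-W_2)$. This mirror plan has the same $d$ and is feasible whenever both $W_j<M$. Your remark that two items do not suffice is exactly this symmetry, but it holds for any number of items.

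Now suppose some plan has welfares $(1,2)$ and some plan gives an agent welfare $12$. Then $M\ge 12$, and the mirror of the first plan has welfares $(M-1,M-2)$. So it has $d=1$ and $\rho=\frac{M-1}{M-2}\le\frac{11}{10}<\frac{6}{5}$, and your second bullet fails.

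The problem is not specific to your numbers. Take any sale-free instance with minimum $d=1$. The better of a $d$-minimizer and its mirror has $\rho\le\frac{M+1}{M-1}$. Every plan with $d\ge 2$ has $\rho\ge\frac{M}{M-2}>\frac{M+1}{M-1}$, since each welfare is at most $M$. Hence every $\rho$-minimizer also minimizes $d$, and the ``vice versa'' direction can never be witnessed there.

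Your fallback has a separate flaw. If revenue lets some plan reach exact balance, that plan has $d=0$, hence $\rho=1$ by Corollary~\ref{easydrho}. It then minimizes both objectives, which contradicts your own first bullet.

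\textbf{Two ways out.} The paper breaks the mirror symmetry with a sale of \emph{zero} price, used purely as a deletion. It takes $u_1=(99,1,0)$ and $u_2=(8,90,2)$ on items $a,b,c$, sets $p\equiv 0$, and makes only $a$ affordable. Selling $a$ yields $\langle\{a\},\{b\},\{c\}\rangle$ with welfares $(1,2)$, the unique $d$-minimizer. Its mirror would give Agent~1 only $c$, which is infeasible. Meanwhile $\langle\emptyset,\{a\},\{b,c\}\rangle$, with welfares $(99,92)$, is the unique $\rho$-minimizer and has $d=7$.

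Alternatively, you can stay sale-free, but then the minimum $d$ must be at least $2$. The $d$-minimizer and its mirror should both sit near $M/2$, and the $\rho$-minimizer near the top. For example, take $B=0$, $p\equiv 0$, and items $(u_1,u_2)=(40,20),(40,3),(11,31),(9,46)$. The minimum $d=2$ is attained only at welfares $(51,49)$ and $(49,51)$. The unique $\rho$-minimizer has welfares $(80,77)$ with $d=3$; enumerating all $16$ allocations confirms this.

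Either way, the proof consists of an explicit instance plus its complete enumeration, and that is exactly what is missing.
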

{
\begin{proof}
Consider the instance with $R=\{a,b,c\}$. 
All unspecified values of $u_j(\cdot)$, $c(\cdot)$, and $p(\cdot)$ are zero. 
The nonzero parameters are 
$c(a)=B$, $c(b)=c(c)>B$, 
$u_1(a)=99$, $u_1(b)=1$, 
$u_2(a)=8$, $u_2(b)=90$, and $u_2(c)=2$.

\begin{itemize}
    \item $\mathcal{S}$ achieves the minimum $d=1$ (with $\rho=2$).
$$\mathcal{S} = \langle\{a\},\{b\},\{c\}\rangle,\,
\text{with}\, W_1(\mathcal{S}) =1,  W_2(\mathcal{S}) = 2.$$

\item $\mathcal{S'}$ achieves the minimum $\rho=\frac{99}{92}$ (with $d=7$):
$$\mathcal{S'} = \langle\emptyset,\{a\},\{b, c\}\rangle,\,
\text{with}\, W_1(\mathcal{S'}) = 99,  \,W_2(\mathcal{S'}) = 92.$$
\end{itemize}
Note that $a$ is the only resource that can be sold. If $a$ is sold, since $p(a) = 0$ and $u_1(c) = 0$, $b$ must go to Agent~1 and $c$ to Agent~2, otherwise at least one agent would have welfare $0$. We now enumerate all remaining feasible plans:
$$\mathcal{S}_1 = \langle\emptyset,\{a,c\},\{b\}\rangle,\,
\text{with}\, W_1(\mathcal{S}_1) = 99,  \,W_2(\mathcal{S}_1) = 90.$$
$$\mathcal{S}_2 = \langle\emptyset,\{b\},\{a,c\}\rangle,\,
\text{with}\, W_1(\mathcal{S}_2) = 1,  \,W_2(\mathcal{S}_2) = 10.$$
$$\mathcal{S}_3 = \langle\emptyset,\{b,c\},\{a\}\rangle,\,
\text{with}\, W_1(\mathcal{S}_3) = 1,  \,W_2(\mathcal{S}_3) = 8.$$
$$\mathcal{S}_4 = \langle\emptyset,\{a,b\},\{c\}\rangle,\,
\text{with}\, W_1(\mathcal{S}_4) = 100,  \,W_2(\mathcal{S}_4) = 2.$$ \end{proof}
}

 We also consider the widely studied concept of envy-freeness as a fairness criterion \cite{brandt2016handbook}, and provide a definition tailored to our framework. However, as we will demonstrate shortly, such a plan does not always exist in our setting. Furthermore, the pursuit of envy-freeness may be incompatible with our earlier objective of minimizing equitability thresholds such as $d$ or $\rho$.

\begin{definition}[Envy-freeness]\label{def:envy-freeness}
For a plan \(\mathcal{S}=\langle S_0,S_1,S_2\rangle\), 
define the (signed) envies:
\[
  \operatorname{envy}_1(\mathcal{S}) = W_1(\mathcal{S'}) - W_1(\mathcal{S}),
  \quad
  \operatorname{envy}_2(\mathcal{S}) = W_2(\mathcal{S'}) - W_2(\mathcal{S}),
\] where $\mathcal{S}'=\langle S_0,S_2,S_1,1-q\rangle$ and $q$ is determined by $\mathcal{S}$. The plan is \emph{envy-free} if
\[
  \operatorname{envy}_1(\mathcal{S}) \le 0
  \quad\text{and}\quad
  \operatorname{envy}_2(\mathcal{S}) \le 0.
\]
That is, neither agent would like to have the other’s bundle (with revenue share).
\end{definition}

\begin{proposition}\label{thm:ef-impossible}
Given an instance of DSIRS, an envy-free plan does not always exist.
\end{proposition}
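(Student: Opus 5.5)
The plan is to exhibit one small DSIRS instance in which selling is ruled out and the agents have identical valuations over two unequal resources. Then every feasible plan is a plain allocation of indivisible goods, and whoever gets the smaller resource envies the other. I will check the conditions of \cref{def:envy-freeness} directly by enumerating all feasible plans.

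\textbf{The instance.} Take $R=\{a,b\}$ with
\[
u_1(a)=u_2(a)=3,\qquad u_1(b)=u_2(b)=1,
\]
so $\sum_{r\in R}u_1(r)=\sum_{r\in R}u_2(r)=4$. Set $p(a)=p(b)=0$, which respects $p(r)\le\max\{u_1(r),u_2(r)\}$. Set $B=0$ and $c(a)=c(b)=1$.

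\textbf{Step 1: no resource can be sold.} Any nonempty $S_0$ has $c(S_0)\ge 1>B$, so every feasible plan has $S_0=\emptyset$. Hence $p(S_0)=0$ and all revenue terms vanish. The formula for $q$ is then degenerate (division by $p(S_0)=0$), but $q$ multiplies $p(S_0)=0$ in both the welfares and the swapped plan $\mathcal{S}'$. So any convention for $q$ gives the same welfares, and I will say so explicitly.

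\textbf{Step 2: enumerate feasible plans.} Feasibility requires $W_1=u_1(S_1)>0$ and $W_2=u_2(S_2)>0$, so each agent gets at least one resource. With two resources, the only feasible plans are
\[
\langle\emptyset,\{a\},\{b\}\rangle \quad\text{and}\quad \langle\emptyset,\{b\},\{a\}\rangle.
\]

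\textbf{Step 3: each plan has an envious agent.}
\begin{itemize}
\item In $\langle\emptyset,\{a\},\{b\}\rangle$, the swapped plan gives Agent~2 welfare $u_2(\{a\})=3$, so $\operatorname{envy}_2=3-1=2>0$.
\item In $\langle\emptyset,\{b\},\{a\}\rangle$, the swapped plan gives Agent~1 welfare $u_1(\{a\})=3$, so $\operatorname{envy}_1=3-1=2>0$.
\end{itemize}
Thus no feasible plan is envy-free, which proves the proposition.

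The only delicate point is the undefined $q$ when $p(S_0)=0$, which Step 1 handles. Beyond that, the main choice is to rule out selling so that revenue cannot compensate the agent holding $b$. Step 1 does this through the budget, and setting prices to zero makes it redundant. A trivial counterexample with a single unsellable resource would have no feasible plan at all. The two-resource instance avoids that degeneracy, since it has feasible plans and none of them is envy-free.
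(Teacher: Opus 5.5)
Your proof is correct, and it takes a different route from the paper's. Your instance satisfies every DSIRS constraint ($\sum u_1=\sum u_2=4$, $p\equiv 0\le\max\{u_1,u_2\}$, $B=0\in\mathbb{N}_0$), and your enumeration of feasible plans is complete. You also handle the undefined $q$ at $p(S_0)=0$ explicitly; the paper's own case $S_0=\varnothing$ meets the same issue without comment.

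The paper instead uses $u_1(a)=70$, $u_1(b)=30$, $u_2(a)=60$, $u_2(b)=40$, $p(a)=p(b)=20$, $c\equiv 1$ and $B=1$, so selling is genuinely available. Any single resource may be sold; only selling both is blocked by the budget, and that plan would give $q=\tfrac12$ and the envy-free outcome $W_1=W_2=20$. Envy survives because the revenue $20$ from one sold item is below every item's utility (the minimum is $30$). The clamped $q$ therefore gives the item holder no revenue, and the agent left with only the revenue share of $20$ envies.

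Your instance switches selling off entirely, through $B=0$ and redundantly through zero prices. The statement then reduces to the classical fact that envy-free allocations of indivisible goods need not exist. That suffices for the existential claim and is easier to check. What the paper's example adds is the point specific to DSIRS: envy-freeness can fail even when a nontrivial sale with positive revenue is permitted, because the available revenue is too small to compensate the agent who gets no item.
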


{
\begin{proof}
Consider the instance with $R=\{a,b\}$ and budget $B=1$. 
All unspecified values of $u_j(\cdot)$, $c(\cdot)$, and $p(\cdot)$ are zero. 
The nonzero parameters are 
$u_1(a)=70$, $u_1(b)=30$, 
$u_2(a)=60$, $u_2(b)=40$, 
$p(a)=p(b)=20$, and 
$c(a)=c(b)=1$. Since selling both $\{a,b\}$ is forbidden, there are only two possible cases:
\begin{itemize}
    \item If $S_0=\varnothing$, the agent who doesn't get $a$ will envy the agent who gets $a$, since $u_1(a)>u_1(b)$ and $u_2(a)>u_2(b)$.
    \item If $S_0 = \{a\}$ or $S_0 = \{b\}$, the revenue is 20. Regardless of which agent receives the remaining resource, the other will envy him, since $\min\{u_1(a), u_1(b), u_2(a), u_2(b)\} \geq 20$. 
\end{itemize}

Hence, no envy-free plan exists for this instance.
     \end{proof}
}

\begin{proposition}
    \label{conflicts}
   Given an instance of DSIRS, a plan $\mathcal{S}$ with minimum \( d \) may not be envy-free, whereas an envy-free plan $\mathcal{S'}$ may not have minimum \( d \).
\end{proposition}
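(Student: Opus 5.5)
We prove \cref{conflicts} with one explicit small instance, in the same style as the proofs of \cref{conflicts3} and \cref{thm:ef-impossible}. The instance must contain a plan with minimum $d$ that is not envy-free, and an envy-free plan whose $d$ is strictly larger.

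\textbf{Why selling is needed.} Without a sale the instance cannot work. If $S_0=\emptyset$ and $T=\sum_r u_1(r)=\sum_r u_2(r)$, swapping the two bundles sends $(W_1,W_2)$ to $(T-W_2,\,T-W_1)$, which has the same difference $d$. In small examples this symmetry repeatedly makes an envy-free plan tie for the minimum. So we break it with a zero-price sale: selling a resource with $p=0$ removes it without adding revenue.

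\textbf{The instance.} Take $R=\{a,b,c\}$ with $B=0$, $c(a)=0$, $c(b)=c(c)=1$, and all prices $0$. Thus $a$ is the only sellable resource, and selling it yields no revenue, so $q$ is irrelevant. The utilities are
\[
u_1(a)=1,\ u_1(b)=1,\ u_1(c)=8,\qquad u_2(a)=7,\ u_2(b)=2,\ u_2(c)=1 .
\]
Both sum to $10$, and $p\le \max\{u_1,u_2\}$ holds trivially.

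\textbf{Step 1: a minimum-$d$ plan that is not envy-free.} Consider $\mathcal{S}=\langle\{a\},\{b\},\{c\}\rangle$. It gives $W_1=W_2=1$, so it is feasible with $d=0$, the global minimum. It is not envy-free: since $p(a)=0$, $\operatorname{envy}_1(\mathcal{S})=u_1(c)-u_1(b)=7>0$.

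\textbf{Step 2: an envy-free plan that is not minimum-$d$.} Consider $\mathcal{S}'=\langle\emptyset,\{c\},\{a,b\}\rangle$. It gives $W_1=8$ and $W_2=9$, so it is feasible with $d=1>0$. It is envy-free because $8\ge u_1(\{a,b\})=2$ and $9\ge u_2(c)=1$.

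\textbf{Step 3: completeness check.} Neither plan depends on any other, so the argument is already complete. For clarity we also list the remaining feasible plans:
\begin{itemize}
\item No sale: $\{a\}|\{b,c\}$ gives $(1,3)$; $\{b\}|\{a,c\}$ gives $(1,8)$; $\{a,b\}|\{c\}$ gives $(2,1)$; $\{a,c\}|\{b\}$ gives $(9,2)$; $\{b,c\}|\{a\}$ gives $(9,7)$.
\item Selling $a$ with the reverse assignment $\{c\}|\{b\}$ gives $(8,2)$.
\end{itemize}
Every one of these has $d\ge 1$. So the only minimum-$d$ plan is $\mathcal{S}$, and it is not envy-free.

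The one real obstacle was ruling out ties created by the swap symmetry of no-sale allocations. The zero-price sale of $a$ resolves it.
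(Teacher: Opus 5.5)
Your proof is correct, and it differs from the paper's in what it establishes. The paper also uses a single explicit instance. It has five resources, unit costs, $B=1$ and positive prices, so revenue redistribution is active. There, $\langle\{y\},\{x,z\},\{v,w\}\rangle$ reaches $d=0$ with $q=\tfrac12$ but has envies $37$ and $11$. Meanwhile $\langle\{z\},\{w,y\},\{v,x\}\rangle$, with $q=1$, is envy-free with $d=5$. The paper stops there and never enumerates the other plans, so it proves only the weak reading: \emph{some} minimum-$d$ plan is not envy-free. In fact its instance does not exhibit a genuine conflict. The plan $\langle\{y\},\{w,x\},\{v,z\}\rangle$ with $q=\tfrac{1}{20}$ gives $W_1=W_2=72.5$ with envies $-16$ and $-42$, so it is both $d$-optimal and envy-free. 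Your Step~3 enumeration shows that in your instance the \emph{unique} minimum-$d$ plan is not envy-free, yet an envy-free plan exists. That is the genuine incompatibility the paper's surrounding text alludes to. The cost is a degenerate zero-price sale, where the formula for $q$ becomes $0/0$. The paper does the same in the proof of Proposition~\ref{conflicts3}, so this is acceptable. The paper's example, for its part, shows the phenomenon with nonzero revenue.

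Two statements in your motivating paragraph are wrong, though Steps~1--3 never use them. First, with $S_0=\emptyset$, swapping bundles sends $(W_1,W_2)$ to $(T-W_1,\,T-W_2)$, not to $(T-W_2,\,T-W_1)$. Such a plan is envy-free exactly when $W_1,W_2\ge T/2$. So the swap yields an envy-free tie only when $W_1$ and $W_2$ lie on the same side of $T/2$.

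Second, the claim that a sale is needed is false. Take four unsellable resources with $(u_1,u_2)$ equal to $(1,1)$, $(0,10)$, $(11,2)$ and $(8,7)$. The minimum $d=3$ is attained only by the two plans with welfares $(9,12)$ and $(11,8)$, and each has an envious agent. Giving Agent~1 only the resource $(11,2)$ yields an envy-free plan with welfares $(11,18)$. Delete or soften that sentence.
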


{
\begin{proof}
    Consider the following instance with $R=\{v,w,x,y,z\}$, $B=1$ and $\forall r\in R: c(r)=1$ and all unspecified values of $u_j(\cdot)$, $c(\cdot)$, and $p(\cdot)$ are zero:
\[
\begin{array}{c|ccccc}
       & v & w & x & y & z \\ \hline
u_1    & 22 & 56 & 16 & 1 & 25 \\
u_2    & 39 & 2  & 28 & 27 & 24 \\ \hline
p      & 2  & 5  & 4  & 10 & 5
\end{array}
\]

Consider the following two solutions:
\begin{itemize}
    \item $\mathcal{S}$ satisfies $|W_1(\mathcal{S})-W_2(\mathcal{S})|=0$ (not envy-free):\\
$\mathcal{S} = \langle\{y\},\{x, z\},\{v, w\}\rangle =\langle\{y\},\{x, z\},\{v, w\},0.5\rangle,\,
\text{with}\, W_1(\mathcal{S}) = W_2(\mathcal{S}) = 46, \,\mathrm{envy}_1(\mathcal{S}) = 37,\,\mathrm{envy}_2(\mathcal{S}) = 11.$

\item $\mathcal{S'}$ achieves no envy for either agent ($W_1(\mathcal{S'})-W_2(\mathcal{S'})=5$):\\
$\mathcal{S'} = \langle\{z\},\{w,y\},\{v, x\}\rangle = \langle\{z\},\{w,y\},\{v, x\},1\rangle
\text{ with } W_1(\mathcal{S'}) = 62,  \,W_2(\mathcal{S'}) = 67, \,\mathrm{envy}_1(\mathcal{S'}) = -24,\,\mathrm{envy}_2(\mathcal{S'}) = -33.$
\end{itemize}
   \end{proof}
}

By Corollary~\ref{easydrho} and Proposition~\ref{conflicts}, we immediately obtain the following corollary.
\begin{corollary}
    \label{conflicts2}
    Given an instance of DSIRS, a plan that has minimum \( \rho \) may not be envy-free, whereas an envy-free plan may not have minimum \( \rho \).
\end{corollary}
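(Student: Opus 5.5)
We reuse the instance constructed in the proof of Proposition~\ref{conflicts} and translate its two witness plans from the difference objective $d$ to the ratio objective $\rho$ via Corollary~\ref{easydrho}. No new instance should be needed.

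The first step is a trivial lower bound. For every feasible plan both welfares are strictly positive, so $\rho=\max\{W_1/W_2,W_2/W_1\}$ is well defined and satisfies $\rho\ge 1$. Hence any feasible plan with $\rho=1$ minimizes $\rho$.

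\emph{First half.} The plan $\mathcal{S}=\langle\{y\},\{x,z\},\{v,w\}\rangle$ from Proposition~\ref{conflicts} has $W_1(\mathcal{S})-W_2(\mathcal{S})=0$ with $W_1=W_2=46>0$. By Corollary~\ref{easydrho}, $W_1(\mathcal{S})/W_2(\mathcal{S})=1$, so $\rho(\mathcal{S})=1$ and $\mathcal{S}$ is $\rho$-minimal by the lower bound. However, Proposition~\ref{conflicts} shows $\mathrm{envy}_1(\mathcal{S})=37>0$, so $\mathcal{S}$ is not envy-free.

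\emph{Second half.} The plan $\mathcal{S}'=\langle\{z\},\{w,y\},\{v,x\}\rangle$ is envy-free, as computed in Proposition~\ref{conflicts}. Its welfare difference is $5\neq 0$, so the converse direction of Corollary~\ref{easydrho} gives $\rho(\mathcal{S}')\neq 1$; explicitly, $\rho(\mathcal{S}')=67/62>1$. Since $\mathcal{S}$ attains $\rho=1$ on the same instance, $\mathcal{S}'$ does not minimize $\rho$.

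The only point needing care, which is the closest thing to an obstacle, is that Corollary~\ref{easydrho} relies on positive welfares. Both witness plans are feasible: each has positive welfares and sells one resource of cost $1=B$. This is exactly why the statement follows immediately from the earlier results.
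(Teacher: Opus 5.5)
Your proposal is correct and follows the paper's own route: the paper obtains this corollary directly by applying Corollary~\ref{easydrho} to the witness instance of Proposition~\ref{conflicts}. Your explicit observation that the $d$-minimal witness has $d=0$, so $\rho=1$, which is the global lower bound on $\rho$, is exactly what justifies carrying the result over from $d$ to $\rho$, since in general $d$- and $\rho$-minimal plans can differ (Proposition~\ref{conflicts3}).
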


\subsection{Adjusted Winner without Splitting (\AWNS)}
The DSIRS framework closely parallels the Adjusted Winner (AW) algorithm. However, directly applying AW in this context may result in resource splits, violating the indivisibility constraint. Accordingly, for any subset $S_0 \subseteq R$ selected for sale, we consider only the partial execution of AW on the residual instance $R \setminus S_0$, halting the procedure immediately before a split would occur. This ensures that all resources in $R \setminus S_0$ are allocated integrally to one agent or the other. We refer to the result as the \emph{AW-derived plan}.


\begin{definition}[AW-derived plan]
Given an instance $\mathcal{I}$ and a subset $S_0 \subseteq R$ of resources to be sold, the \emph{AW sub-plan} $SP(S_0, \mathcal{I})$ is the allocation $(G_1, G_2)$ produced by running the Adjusted Winner (AW) algorithm on $R \setminus S_0$,
halting immediately in Phase~2 (a) before any split would occur or (b) when $|u_1(S_1)-u_2(S_2)| \le p(S_0)$ holds, since in that case the welfare gap between the agents can already be balanced by redistributing the sale revenue.
The \emph{AW-derived plan} $P_\textbf{AW}(S_0, \mathcal{I})$ is the triple $\langle S_0, S_1, S_2 \rangle$.
\end{definition}

The following example demonstrates how selling resources can eliminate the need for splitting.
\begin{example}

Consider a couple, Alex and Belle, who decide to divide their joint
possessions: a vintage Rolex watch $r_1$, four local art pieces
$r_2$---$r_5$, and a designer bag $r_6$.
Their valuations and potential sale revenues are listed below.
Under the classical Adjusted Winner (AW) procedure, the watch $r_1$
would need to be split to achieve equity, even after transferring all local art pieces from Alex to Belle (Alex~56, Belle~50).

\begin{center}
\begin{tabular}{lcccccc}
\toprule
 & $r_1$ & $r_2$ & $r_3$ & $r_4$ & $r_5$ & $r_6$ \\
\midrule
Alex & 56 & 11 & 11 & 11 & 11 & 0 \\
Belle & 50 & 10 & 10 & 10 & 10 & 10 \\
Revenue & 50 & 5 & 5 & 5 & 5 & 5 \\
\bottomrule
\end{tabular}
\end{center}

Considering selling the watch $r_1$, Alex receives
$\{r_2,r_3,r_4,r_5\}$ and Belle receives $\{r_6\}$.
Dividing the revenue proportionally (Alex~8, Belle~42)
makes both agents’ welfares equal at~52.
\end{example}
This illustrates how replacing splitting with selling can restore
fairness without compromising indivisibility, leading to our
\emph{Adjusted Winner without Splitting (AWNS)} framework.

\subsection{Problem Formulations}
We introduce the family of $\AWNS$ problems, each of which seeks to prevent the splitting of resources between agents when applying the Adjusted Winner (AW) algorithm. 

\begin{definition}[$\AWNSD$]
Given an instance of DSIRS $\mathcal{I}$, find $S_0 \subseteq R$ such that the AW-derived plan $\mathcal{S} = P_{\mathrm{AW}}(S_0, \mathcal{I})$ minimizes
\[
d = \left| W_1(\mathcal{S}, \mathcal{I}) - W_2(\mathcal{S}, \mathcal{I}) \right|.
\]
\end{definition}

\begin{definition}[$\AWNSDC$]
Given an instance of DSIRS $\mathcal{I}$ and a fixed $d \geq 0$, find $S_0 \subseteq R$ such that the AW-derived plan $\mathcal{S} = P_{\mathrm{AW}}(S_0, \mathcal{I})$ satisfies $\left| W_1(\mathcal{S}, \mathcal{I}) - W_2(\mathcal{S}, \mathcal{I}) \right| \leq d$ and minimizes
\[
c(S_0) = \sum_{r \in S_0} c(r).
\]
\end{definition}

\begin{definition}[$\AWNSR$]
Given an instance of DSIRS $\mathcal{I}$, find $S_0 \subseteq R$ such that the AW-derived plan $\mathcal{S} = P_{\mathrm{AW}}(S_0, \mathcal{I})$ minimizes
\[
\rho = \frac{\max\{ W_1(\mathcal{S}, \mathcal{I}), W_2(\mathcal{S}, \mathcal{I}) \}}{\min\{ W_1(\mathcal{S}, \mathcal{I}), W_2(\mathcal{S}, \mathcal{I}) \}}.
\]
\end{definition}

\begin{definition}[$\AWNSRC$]
Given an instance of DSIRS $\mathcal{I}$ and a fixed $\rho \geq 1$, find $S_0 \subseteq R$ such that the AW-derived plan $\mathcal{S} = P_{\mathrm{AW}}(S_0, \mathcal{I})$ satisfies
$
\frac{\max\{ W_1(\mathcal{S}, \mathcal{I}), W_2(\mathcal{S}, \mathcal{I}) \}}{\min\{ W_1(\mathcal{S}, \mathcal{I}), W_2(\mathcal{S}, \mathcal{I}) \}} \leq \rho
$
and minimizes
\[
c(S_0) = \sum_{r \in S_0} c(r).
\]
\end{definition}

The next two sections address the algorithmic and complexity aspects of the $\AWNS$ problem family. Section 4 establishes the computational complexity and inapproximability of these problems, while Section 5 presents an FPTAS for $\AWNSR$.

\section{Computational Complexity and Inapproximability}
In this section, we show that all above problems except for $\AWNSR$
are weakly NP-hard and inapproximable.
To make these hardness results precise, we first recall the formal notion of multiplicative approximation~\cite{vazirani2001approximation}.

\begin{definition}[Multiplicative Approximation]
Let \(Q\) 
be a minimization problem whose feasible solutions yield objective values \(f(\cdot)\).  
For any instance, let
\[
\text{OPT} = \min_{\text{sol} \in \mathfrak{S}_\text{ALL}} f(\text{sol}), \quad
\text{ALG} = f(\text{sol}_{\text{alg}}),
\]
where \(\text{sol}_{\text{alg}}\) is the solution returned by a polynomial-time algorithm,  $\mathfrak{S}_\text{ALL}$ is the set for all feasible solutions. Then the algorithm is called a \emph{multiplicative \(\beta\)-approximation} if
\[
\mathrm{ALG} \le \beta \cdot \mathrm{OPT},
\]
for all instances, where \(\beta \ge 1\) is the approximation ratio. If no constant $\beta$ satisfies this, we say the inapproximability holds for $Q$.

\end{definition}

This section centers on Theorem~\ref{NO-APPROX2}, which establishes the inapproximability of $\AWNSDC$.
Before the proof begins, we introduce the following two problems: 

\begin{itemize}
    \item \(\frac{1}{2}\)-Balanced Subset Sum: Given a set of integers \(H = \{a_1, a_2,  \ldots , a_n\}\) and a target sum \(B\), determine if there exists a subset \(A\) of integers from \(H\) that sums up to \(B\) and \(|A|=\lfloor \frac{1}{2}|H|\rfloor\).
    \item \(\frac{1}{k}\)-Balanced Partition: Given a set of integers \(H^* = \{a_1, a_2, \ldots , a_n\}\), determine whether there exists a subset \(A\) of integers from \(H^*\) that sums up to \(\frac{1}{2}\sum_{s\in H} s\) and \(|A|=\lfloor \frac{1}{k}|H^*|\rfloor\).
\end{itemize}
\begin{claim}
   \(\frac{1}{2}\)-Balanced Partition~\cite{garey1979computers} is weakly NP-hard, so \(\frac{1}{2}\)-Balanced Subset Sum is also weakly NP-hard.
\end{claim}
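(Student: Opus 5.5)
The claim has two parts. First, that \(\frac{1}{2}\)-Balanced Partition is weakly NP-hard, which is taken from~\cite{garey1979computers}. Second, a polynomial-time many-one reduction from \(\frac{1}{2}\)-Balanced Partition to \(\frac{1}{2}\)-Balanced Subset Sum. The first part I take as given. Garey and Johnson list the equal-cardinality partition problem as NP-complete, via a standard modification of the reduction to Partition. It admits a pseudo-polynomial dynamic program over (number of chosen elements, partial sum), so the hardness is only in the weak sense. I would state this pseudo-polynomial algorithm explicitly, since "weakly" requires both NP-hardness and the absence of strong NP-hardness (unless P=NP); the same DP shows that \(\frac{1}{2}\)-Balanced Subset Sum is weakly rather than strongly hard.

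The reduction itself is the identity map on the numbers with a computed target. Given \(H^*=\{a_1,\dots,a_n\}\), set \(H:=H^*\) and \(B:=\frac{1}{2}\sum_{s\in H^*} s\). If \(\sum_{s\in H^*}s\) is odd, the Partition instance is trivially a no-instance, so output any fixed no-instance of Balanced Subset Sum (e.g.\ \(H=\{1\}\), \(B=2\)). Otherwise \(B\) is an integer, and the cardinality constraint \(|A|=\lfloor \frac12 |H|\rfloor\) coincides exactly with the one in \(\frac{1}{2}\)-Balanced Partition with \(k=2\). Hence a subset \(A\) witnesses one instance if and only if it witnesses the other. The map takes linear time, and it preserves the binary encoding length up to a constant factor, which matters because we are reducing between number problems under the weak notion.

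The only real subtlety is keeping the numbers polynomially bounded in the input encoding, so that hardness transfers as weak NP-hardness and no unary blowup is introduced. Here the reduction leaves all numbers unchanged and \(B\) is at most their sum, so this is immediate. I would also add a remark fixing the notation: the Balanced Partition definition sums over \(H\) but means \(H^*\). Beyond that I expect no obstacle; the main care is just handling the odd-sum case and the floor consistently.
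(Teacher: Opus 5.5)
Your proposal is correct and matches the paper's reasoning. The paper gives no separate proof: it cites Garey and Johnson for the equal-cardinality Partition problem, and the ``so'' rests on exactly your observation that this problem is the special case $B=\frac{1}{2}\sum_{s\in H^*}s$ of \(\frac{1}{2}\)-Balanced Subset Sum. Your additions are sound: the odd-sum case, the pseudo-polynomial DP over (count, partial sum) that justifies the qualifier ``weakly'', and the $H$ vs.\ $H^*$ typo. One small correction: the weak (rather than strong) hardness comes from that DP, not from the reduction leaving the numbers unchanged, since any polynomial-time reduction already transfers ordinary NP-hardness.
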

\begin{theorem}\label{1/2-BP-NP}
    For any \(k\) such that \(2 \leq k < n\), \(\frac{1}{k}\)-Balanced Partition is weakly NP-hard.
\end{theorem}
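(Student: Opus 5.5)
We reduce from $\frac{1}{2}$-Balanced Partition, which is weakly NP-hard by the Claim above. Membership in NP is immediate: a subset $A$ is a certificate, and both its cardinality and its sum can be checked in polynomial time. For hardness, take an instance $H=\{a_1,\dots,a_n\}$ of $\frac{1}{2}$-Balanced Partition and assume without loss of generality that $n$ is even. (If $n$ is odd, add one element $0$; then $\lfloor (n+1)/2\rfloor$ still equals the required size up to a harmless adjustment that we treat the same way below.) Let $T=\sum_i a_i$. The question is whether some $A\subseteq H$ has $|A|=n/2$ and sum $T/2$. The plan is to pad $H$ with dummy elements that change the required cardinality from $n/2$ to $\lfloor |H^*|/k\rfloor$, while forcing any balanced subset of the padded set to correspond to a balanced half of $H$.

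For $k=2$ there is nothing to do. For $k>2$ we must shrink the relative size of $A$, so we add $N$ zeros, or in general padding elements that cannot help. Zeros do not change the sum, but they do let $A$ take any size between $n/2$ and $n/2+N$ by absorbing zeros. This breaks the reduction, so instead we first shift every element. Let $a_i'=a_i+M$ with $M>T$, so that any subset summing to a given target has its cardinality determined by the target modulo $M$. Now add $N$ zeros and choose $N$ so that $\lfloor (n+N)/k\rfloor = n/2$, which requires $N\approx (k/2-1)n$. The new total is $T+nM$, so the target becomes $(T+nM)/2 = T/2+(n/2)M$.

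We then verify both directions. A subset $A'$ of size exactly $n/2$ reaching this target can contain no zeros: any nonzero shifted element it contains contributes one $M$, and since $M>T$ the target forces exactly $n/2$ shifted elements, which uses up the whole size budget. Hence $A'$ is a size-$n/2$ subset of $H$ with original sum $T/2$. Conversely, any balanced half of $H$ lifts directly to such an $A'$. All numbers are polynomially bounded in the bit length, and $N$ is polynomial in $n$ for fixed $k$ (or for $k<n$ as a function of the input), so the reduction is polynomial.

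The main obstacle is the exact floor arithmetic. We need some integer $N\ge 0$ with $\lfloor (n+N)/k\rfloor=n/2$; the choice $N=kn/2-n$ works, and rounding to make parities consistent needs care. The total $T+nM$ must also be even, which we get by choosing $M$ even and assuming $T$ is even (otherwise the answer is trivially no). We also must respect the side condition $k<|H^*|$, which holds since $|H^*|=kn/2\ge n>k$ once $n$ is large; small $n$ can be solved by brute force. If the paper's intended reading is that $k$ is part of the input, the same construction still works because $N$ is polynomial.
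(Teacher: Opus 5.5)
Your reduction is correct, and it takes a different and simpler route than the paper.

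\textbf{The paper's construction.} The paper reduces from $\frac{1}{2}$-Balanced Subset Sum, whose target $B$ is arbitrary. It therefore has to engineer a total whose half is exactly the intended target. It shifts each $a_i$ by $2X$, and, with $k'=k-1$, adds:
\begin{itemize}
\item a heavy element $c=(\frac{(k'-1)n}{2}+k')X-B$ that every solution is forced to contain,
\item $\frac{(k'-1)n}{2}+k'-1$ dummies of value $X$,
\item a correcting element $e=X-(\sum_{s\in H}s-B)$.
\end{itemize}
The required size becomes $\frac{n}{2}+1$. It then argues that, besides $c$, only shifted originals can be chosen.

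\textbf{Your construction.} You start instead from $\frac{1}{2}$-Balanced Partition, whose target is already half the total. Padding with zeros therefore keeps the target equal to half the new total at no cost. The single shift $M>T$ fixes the number of original elements by a one-line count: with $j\le\frac{n}{2}-1$ shifted elements the sum is at most $(\frac{n}{2}-1)M+T<\frac{n}{2}M$. This gives fewer auxiliary elements, a cleaner verification, and a uniform treatment of every $k\ge 2$ (with $N=0$ when $k=2$).

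\textbf{What each buys.} The paper's construction keeps every element strictly positive. Yours relies on zeros and repeated values being admissible. The definition (``a set of integers'') allows this, and the paper itself uses repeated dummies, so this is not a gap.

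\textbf{One loose spot.} Your remark about odd $n$ is vague. Make it precise as follows: $H$ has a subset of size $\frac{n-1}{2}$ and sum $T/2$ iff it has one of size $\frac{n+1}{2}$ and sum $T/2$, by taking complements. So appending a single $0$ gives an equivalent instance of even size. Alternatively, start directly from the standard even-size version, as the paper does.
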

\begin{proof}
    We prove this theorem by reducing \(\frac{1}{2}\)-Balanced Subset Sum to \(\frac{1}{k}\)-Balanced Partition for any \(k\) known to be NP-hard~\cite{garey1979computers}. The input of the \(\frac{1}{2}\)-Balanced Subset Sum is \(H=\{a_1,a_2,\ldots,a_n\}\) and \(B\). We assume that \(n\) is even and \(B\leq \frac{1}{2}\sum_{s\in H} s\). For a \(\frac{1}{k}\)-Balanced Partition, we create \(H^*\) by adding normal integer \(a_i^*=a_i+2X\) (\(X\) is a large enough number) for each \(a_i\in H\). We set $k'=k-1$. In addition, we add an integer \(c=(\frac{(k'-1)n}{2}+k')X-B\) and a set \(D\) of \(\frac{(k'-1)n}{2}+k'-1\) dummy integers where each dummy integer is exactly \(X\), and a special number \(e=X-(\sum_{s\in H}s -B)\). Here, \(\sum_{s\in H^*} s=(k'n+2k'+n)X\) and $|H^*|=(k'+1)(\frac{n+1}{2})$.
 The reduction is polynomial bounded.  We now show that \(H\) is a YES-instance for \(\frac{1}{2}\)-Balanced Subset Sum if \(H^*\) is a YES-instance for \(\frac{1}{k}\)-Balanced Subset Sum:
    
    \((\Longrightarrow)\) Since \(H\) is a YES-instance, there is a subset \(A\) with \(\sum_{s\in A} s=B\) and \(|A|=\lfloor \frac{1}{2}|H|\rfloor\). Thus, we can get a set \(A^*\) which shares the same indices with \(A\) and sums up to \(nX+B\). Thus, we get a feasible solution: \(O^*=A^*\cup \{c\}\), which sums up to \((\frac{k'n+n}{2}+k)X\).  In addition, \(|O^*|=\frac{n}{2}+1\) and \(|H^*\backslash O^*|=k'(\frac{n}{2}+1)\). 
    
    \((\Longleftarrow)\) Since \(H^*\) is a YES-instance, the solution \(O^*\) must include exactly \(\frac{n}{2}+1\) integers. \(O^*\) must include \(c\). Otherwise, \(O^*\) can only sum up to at most \((n+3)X\). Besides $c$, we need to find \(\frac{n}{2}\) integers summing up to \(nX+B\). These chosen integers must be normal integers, since each integer has to provide at least \(2X\) while each of the dummy integers and the special one can provide at most \(X\). This induces a solution for \(H\).
     \end{proof}

 \begin{remark}
      When  $c(r)=1$ and $p(r)=0$ for all $r\in R$,  choosing resources to sell is exactly choosing them to "delete" and at most $B$ resources can be deleted.
 \end{remark}

\begin{theorem}\label{NO-APPROX2}
Unless $\mathrm{P} = \mathrm{NP}$, the inapproximability holds for $\AWNSDC$.
\end{theorem}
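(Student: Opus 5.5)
The plan is to show that already \emph{deciding whether any feasible solution exists} for $\AWNSDC$ is NP-hard. Inapproximability then follows by a standard argument. Suppose there were a polynomial-time multiplicative $\beta$-approximation for some constant $\beta$. Whenever the set $\mathfrak{S}_\text{ALL}$ of feasible solutions is nonempty, such an algorithm must return a feasible $S_0$ with $c(S_0) \le \beta\cdot\mathrm{OPT}$. Whenever $\mathfrak{S}_\text{ALL}$ is empty, it cannot return a feasible solution at all. Checking feasibility of the returned $S_0$ is polynomial: run the AW sub-plan $SP(S_0,\mathcal{I})$, compute $q$, and test $c(S_0)\le B$, $W_1,W_2>0$ and $|W_1-W_2|\le d$. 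Hence the algorithm would decide the NP-hard feasibility question, and $\mathrm{P}=\mathrm{NP}$ would follow. So the whole proof reduces to a hardness reduction with threshold $d=0$.

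For the reduction I would start from $\frac{1}{k}$-Balanced Partition, which is weakly NP-hard by Theorem~\ref{1/2-BP-NP}, and use the regime of the Remark: every resource has $c(r)=1$, and the budget $B$ equals the prescribed cardinality $\lfloor |H^*|/k\rfloor$. I will choose $k$ freely to make the construction convenient. For each integer $a_i\in H^*$ I create a resource $r_i$ whose utilities are designed so that the AW procedure behaves predictably; the natural first attempt is $u_1(r_i)=u_2(r_i)=a_i$. I would add a small number of padding resources to meet the common-scale condition $\sum_r u_1(r)=\sum_r u_2(r)$ and to pin down which agent is advantaged after Phase~1. These padding resources get a cost larger than $B$, so they are never sold.

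The intended correspondence is as follows. A set $A$ with $|A|=\lfloor |H^*|/k\rfloor$ and $\sum_{a\in A}a=\frac12\sum_{s\in H^*}s$ yields a sale set $S_0=A$. With $d=0$, the AW sub-plan on $R\setminus S_0$ then either halts at exact equality or halts with a gap at most $p(S_0)$, which the revenue redistribution closes exactly. Conversely, any feasible $S_0$ must hit exactly the half-sum and use exactly the allowed number of items. I expect to enforce the cardinality by adding a large offset $X$ to every value (as in the proof of Theorem~\ref{1/2-BP-NP}), so that any deviation in count produces an irreparable welfare gap.

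The main obstacle is controlling Phase~2 of AW. Its transfers follow the ratio order $u_a(r)/u_b(r)$, it breaks ties, and it halts as soon as the remaining gap is at most $p(S_0)$. Because of this early halting, the final welfare gap is not simply a subset-sum expression. My first attempt would be to set $p\equiv 0$, so that halting condition (b) forces exact equality, and to give all $r_i$ the identical ratio $1$. Then Phase~1 places them all with Agent~2, and Phase~2 transfers them one by one from Agent~2 to Agent~1. I would then need the padding items and the offset $X$ to guarantee two things. First, the sequence of partial transfers can reach exact equality only when the unsold set has the balanced-partition structure. Second, no partial transfer accidentally achieves $d=0$. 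Making this precise, including the positivity condition $W_1,W_2>0$, is where I expect most of the work to lie.
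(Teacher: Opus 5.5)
\paragraph{Verdict.} Your outer argument is correct, but the hardness reduction it depends on is missing, and the gadget you sketch for it fails.

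\paragraph{The outer argument.} Suppose deciding whether any feasible plan with $d=0$ exists is NP-hard. Then a polynomial-time algorithm with any finite ratio would return a feasible $S_0$ exactly on YES-instances. Feasibility of a returned $S_0$ is checkable by recomputing $SP(S_0,\mathcal{I})$. This is essentially how the paper's proof ends (``if $\mathcal{A}$ is a YES-instance, such two solutions exist, and otherwise no solution exists''). The paper's extra appeal to the cost gap $l/o=k-1$ is unnecessary.

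\paragraph{The gap.} The theorem's real content is the NP-hardness of that feasibility question, and you have not supplied it. You leave the construction open, and the gadget you do propose does not work. Write $a(\cdot)$ for the total value of a set of partition items. If every $r_i$ has $u_1(r_i)=u_2(r_i)$, all of them sit with Agent~2 after Phase~1. Since Agent~2 is the advantaged agent in your design, they are the first items moved in Phase~2: ratio $1$ is the least possible $u_2/u_1$ on Agent~2's side, and ties go in index order. So for a sold set $S_0$, the transferred set $T$ is an index-prefix of the unsold $r_i$.

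Use the natural calibration, in which selling the balanced set $A$ equalizes the agents with no transfer. Then $d=0$ holds iff $a(S_0)+2a(T)=t$ for the target $t$, not iff $a(S_0)=t$. A uniform offset $X$ only pins down $|S_0|+2|T|$, and the budget $|S_0|\le\lfloor |H^*|/k\rfloor$ does not exclude $|T|\ge 1$.

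Concretely, $H^*=\{2,1,1,4\}$ with $k=2$ is a NO-instance. Yet if $r_1$ is the item of value $2$ (or $2+X$), then $S_0=\emptyset$ is already feasible with $d=0$: the initial gap $4$ (or $4+2X$) is closed by transferring $r_1$ alone. So the direction NO$\Rightarrow$infeasible breaks. Any other calibration makes the outcome depend on which unsold items happen to form an index-prefix, which Balanced Partition does not control.

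\paragraph{The missing idea.} Keep the partition items with the agent who is \emph{never} advantaged. Then Phase~2 can only move gadget items with a fixed effect, and $d=0$ becomes a pure subset-sum condition on $S_0$. This is what the paper's construction does. The items $a_i+m$ are worth $0$ to Agent~2 and stay with Agent~1. Agent~2 starts ahead by $2m^*$ and can only transfer its two special items, which have huge $u_2/u_1$ ratios. The large offset $m$ then forces $d=0$ to mean that exactly $o$ partition items of total $s_{\text{half}}$ are sold.

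\paragraph{A simplification.} Because you argue via feasibility rather than via a cost gap, you do not need the cardinality constraint at all. A plain subset-sum target on the sold set already suffices, with unit costs on partition items, unsellable gadget items and $B=n$. You can then drop both the Balanced Partition source and the offset $X$.
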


{\begin{proof}
To show inapproximability, we reduce
the $\frac{1}{k}$-Balanced Partition problem to $\AWNSDC$. Given an instance of $\frac{1}{k}$-Balanced Partition $H^*=\{a_1,a_2,\ldots,a_n\}$, we construct an instance $\mathcal{A}$ of $\AWNSDC$ as follows: (a) welfare equality is required, i.e., $d = 0$ or $\rho = 1$; (b) selling yields no return, i.e., $p(r) = 0$ for all $r \in R$; and (c) each resource has unit cost, $c(r) = 1$ for all $r \in R$, and (d) selling all the resources is not allowed, $B = n + 1$. In the following, the
first and second row refer to the utilities of resources of Agent~1 and
Agent~2 respectively.
    \[
   \begin{bmatrix}
        \begin{smallmatrix}
             a_1+m &\ldots& a_o+m&\ldots &a_l+m&\ldots & a_n+m & m^* & m^*\\
            0  &\ldots &0 &\ldots& 0 &\ldots&  0  & s_\text{half}+lm+m^* & s_\text{half}+om+m^* 
        \end{smallmatrix}
    \end{bmatrix}
    \]
    with \(l=n-\lfloor\frac{n}{k}\rfloor\), \(o=\lfloor\frac{n}{k}\rfloor\), \(m\) is a large enough number, \(m^*=mn\), \(s=\sum^{n}_{i=1}{a_i}\), and \(s_\text{half}=\frac{1}{2}s\). Here, w.l.o.g., we assume $\frac{n}{k}$ and $s_\text{half}$ are integers, s.t. \(l=(k-1)o\) and \(s\) is even.  The reduction can be performed in polynomial time. Now we have further observations.

    \begin{itemize}
        \item Exactly one of the last two resources should be kept for Agent~2. Otherwise, Agent~2 gets either 0 or a too large number. The chosen one is kept for Agent~2, and the unchosen one cannot be deleted. 
        \item Instead, the unchosen one should be transferred to Agent~1. This can ensure that Agent~1 gets at least \(m^*\).
        \item If the last but one is transferred, then exactly \(l\) resources have to be deleted; if the last one is transferred, then exactly \(o\) resources have to be deleted. The latter is the optimal solution.
    \end{itemize}
    
     We show that \(H^*\) is a YES-instance if and only if \(\mathcal{A}\) is a YES-instance.
    
    \((\Longrightarrow)\) Since \(H^*\) is a YES-instance, there is a subset \(A\) with \\ \(\sum_{e\in A}e =\frac{1}{2}\sum_{e\in H^*}e= s_\text{half}\) and \(|A|=\lfloor \frac{1}{k}|H^*|\rfloor=o\). Then, we get two solutions for \(\mathcal{A}\): (1) Transferring the last resource and deleting exactly \(o\) other resources corresponding to the resources in \(A\). In this case, either of the two agents gets \(s_\text{half}+lm+m^*\). (2) Transferring the last but one resource and deleting exactly \(l\) other resources corresponding to the resources in \(H^*\backslash A\). In this case, either of the two agents gets \(s_\text{half}+om+m^*\).
    
    \((\Longleftarrow)\) Since \(\mathcal{A}\) is a YES-instance, there are at least two solutions: (a) Transferring the last resource and deleting exactly \(o\) other resources. (b) Transferring the last but one resource and deleting exactly \(l\) other resources. The solution (a) induces an optimal solution for \(H^*\).\par
Assume, for contradiction, that there exists a polynomial-time approximation algorithm for $\AWNSDC$ with approximation ratio $\beta$. For any instance of $\frac{1}{k}$-Balanced Partition with $k-1 \leq \beta$, the constructed $\AWNSDC$ instance has $\frac{l}{o} = k-1 \leq \beta$, where $o$ is the number of resources deleted in the optimal solution and $l$ in the only alternative.
 Thus, the algorithm would return a solution with $l$ resources to delete, thereby indirectly distinguishing between YES- and NO-instances of the $\frac{1}{k}$-Partition problem in polynomial time, contradicting its NP-hardness unless $\mathrm{P} = \mathrm{NP}$.

In summary, to achieve $d = 0$ in the constructed instance of $\AWNSDC$, there are at most two feasible solutions: the one deletes $o$ resources (the optimal) and the other deletes $l$ resources; if $\mathcal{A}$ is a YES-instance, such two solutions exist, and otherwise no solution exists. The construction guarantees that only these two solutions are possible, with $\frac{l}{o}=k-1$ (which is not constant). Therefore, no polynomial-time algorithm can achieve a constant approximation ratio $\beta$ for $\AWNSDC$ unless $\mathrm{P} = \mathrm{NP}$.
     \end{proof}}

Combining Theorem \ref{NO-APPROX2} and Corollary \ref{easydrho} yields the following corollary.
\begin{corollary}
Unless $\mathrm{P} = \mathrm{NP}$, the inapproximability holds for $\AWNSDC$, $\AWNSRC$, even when $d=0,\rho=1$.
\end{corollary}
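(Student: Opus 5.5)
The plan is to obtain both parts of the statement from the reduction in Theorem~\ref{NO-APPROX2} and the equivalence in Corollary~\ref{easydrho}.

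For $\AWNSDC$, reread that proof and note that the constructed instance $\mathcal{A}$ already has $d=0$ as its threshold. Consequently the feasible solutions of $\mathcal{A}$ are exactly the sets $S_0$ whose AW-derived plan gives $W_1=W_2$. The inapproximability argument therefore applies verbatim with $d=0$, and no further work is needed for $\AWNSDC$.

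For $\AWNSRC$ with $\rho=1$, use the same instance $\mathcal{A}$: same utilities, $p\equiv 0$, unit costs, and the same budget. The key step is to show that $\mathcal{A}$ has the same feasible sets $S_0$ under both problems. Fix any $S_0$ and its AW-derived plan $\mathcal{S}=P_{\mathrm{AW}}(S_0,\mathcal{A})$. Feasibility requires $W_1(\mathcal{S}),W_2(\mathcal{S})>0$, so the ratio $\rho$ is well defined. By Corollary~\ref{easydrho}, $|W_1(\mathcal{S})-W_2(\mathcal{S})|=0$ holds if and only if $\max\{W_1/W_2,W_2/W_1\}=1$, that is, if and only if $\rho\le 1$ (since $\rho\ge 1$ always). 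The two problems also share the objective $c(S_0)$ and the budget. Hence the $\AWNSRC$ instance with $\rho=1$ has exactly the same feasible solutions and the same objective values as the $\AWNSDC$ instance with $d=0$.

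It follows that any polynomial-time $\beta$-approximation for $\AWNSRC$ would also be one for $\AWNSDC$ on these instances. By Theorem~\ref{NO-APPROX2}, no such algorithm exists unless $\mathrm{P}=\mathrm{NP}$. As in that proof, when no feasible solution exists the algorithm must report this, and the YES/NO distinction transfers between the two problems.

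The only delicate point is that Corollary~\ref{easydrho} needs positive welfares. This is guaranteed because feasible plans are required to have $W_1,W_2>0$, and the reduction's observation that Agent~2 must keep one of the last two resources while Agent~1 receives at least $m^*$ ensures this. Nothing else should be an obstacle.
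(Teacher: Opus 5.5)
Your proposal is correct and follows the same route as the paper, which obtains this corollary in one line by combining Theorem~\ref{NO-APPROX2} (whose reduction already requires welfare equality, $d=0$) with Corollary~\ref{easydrho} to pass to $\rho=1$. The detail you add makes explicit what that one-line combination leaves implicit: on the constructed instance, the two problems have the same feasible sets $S_0$ and the same objective $c(S_0)$, and positive welfare already follows from the definition of feasibility.
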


In the proof of Theorem~\ref{NO-APPROX2}, we observe that the instance of the $\frac{1}{k}$-Balanced Partition problem is a YES-instance if and only if there exists an allocation for which $d = 0$ and $\rho = 1$---that is, exactly $o$ resources are sold. Evidently, the same reduction applies to show that $\AWNSD$, and {$\AWNSDC$} are both weakly NP-hard.

\begin{corollary}
    \label{NP-20}
  $\AWNSD$, $\AWNSR$ are both weakly NP-hard, and even when $d=0,\rho=1$ is given, $\AWNSDC$, $\AWNSRC$, are also weakly NP-hard.
 
\end{corollary}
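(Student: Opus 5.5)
Membership in NP is immediate for the decision versions: given $S_0$, the AW-derived plan $P_{\mathrm{AW}}(S_0,\mathcal{I})$ and the balancing share $q$ are computed in polynomial time, so $d$ or $\rho$, feasibility $c(S_0)\le B$ and positivity of welfares can all be checked directly. Hardness comes from reusing the reduction in the proof of Theorem~\ref{NO-APPROX2} from $\frac{1}{k}$-Balanced Partition, which is weakly NP-hard by Theorem~\ref{1/2-BP-NP}, together with the Remark that $c\equiv 1$, $p\equiv 0$ turns selling into deleting at most $B$ resources.

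\textbf{Step 1: $\AWNSDC$ and $\AWNSRC$ with $d=0$, $\rho=1$.} Take the instance $\mathcal{A}$ built in the proof of Theorem~\ref{NO-APPROX2}. That proof shows $H^*$ is a YES-instance if and only if $\mathcal{A}$ has some $S_0$ whose AW-derived plan is exactly equitable. Moreover, in the YES case the minimum cost is $o=\lfloor n/k\rfloor$. So the decision question ``is there a plan with $d=0$ and $c(S_0)\le o$?'' is equivalent to the partition question. By Corollary~\ref{easydrho}, $d=0$ and $\rho=1$ coincide, so the same instance settles $\AWNSRC$ with $\rho=1$. The numbers in the construction are polynomial in the input magnitudes, so the hardness is weak.

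\textbf{Step 2: the unconstrained versions $\AWNSD$ and $\AWNSR$.} Keep $c\equiv1$, $p\equiv0$, but lower the budget to $B=o$. Then an optimal solution of $\AWNSD$ has value $0$ if and only if some plan with at most $o$ deletions is equitable, which by Step~1 happens if and only if $H^*$ is a YES-instance. By Corollary~\ref{easydrho}, the same threshold test works for $\AWNSR$: the optimum is $1$ if and only if it is a YES-instance. Any exact algorithm for $\AWNSD$ or $\AWNSR$ would therefore decide the partition problem.

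\textbf{Main obstacle.} The delicate point is Step~2: after the budget is lowered to $o$, every feasible $S_0$ (at most $o$ deletions, not only the two structured solutions) must still yield $d=0$ only when a balanced partition exists. Feasibility also requires positive welfare for both agents.

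I would check this directly against the AW dynamics on $\mathcal{A}$. Agent~1 initially holds all items $a_i+m$. Agent~2 holds the two big items, and since $u_1=m^*$ on those, one of them is transferred in Phase~2.

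With at most $o$ deletions, the only way to reach equality is the branch in which the last resource is transferred and exactly $o$ of the items $a_i+m$ are deleted. Equality then forces the deleted items to sum to exactly $s_{\text{half}}$, using that $m$ is large enough that the $m$-terms fix the count. If the AW ordering of transfers is ambiguous, I would perturb $m^*$ slightly, or appeal to the paper's observations in the proof of Theorem~\ref{NO-APPROX2}, to pin the transfer order down.
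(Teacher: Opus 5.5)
Your proposal is correct and follows the paper's own argument: it reuses the construction from the proof of Theorem~\ref{NO-APPROX2}, observes that $d=0$ (equivalently $\rho=1$, by Corollary~\ref{easydrho}) is attainable if and only if $H^*$ is a YES-instance, and reads off weak NP-hardness of \AWNSD, \AWNSR, \AWNSDC{} and \AWNSRC{} from this. Lowering the budget to $B=o$ in Step~2 is harmless but unnecessary, because with the original budget $B=n+1$ every equitable plan already yields a size-$o$ subset of the $a_i$ summing to $s_{\text{half}}$; this is why the paper simply keeps the same instance.
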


\begin{theorem}\label{InAppro2}
Unless $\mathrm{P} = \mathrm{NP}$, the inapproximability holds for  $\AWNSD$.
\end{theorem}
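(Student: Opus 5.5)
We prove Theorem~\ref{InAppro2} with the standard gap argument for objectives whose optimum can be zero. A multiplicative $\beta$-approximation must satisfy $\mathrm{ALG}\le \beta\cdot \mathrm{OPT}$. Hence whenever $\mathrm{OPT}=0$ it has to return a plan with $d=0$ exactly. So it suffices to reduce an NP-hard decision problem to $\AWNSD$ such that YES-instances have optimum $d=0$ and NO-instances have optimum $d\ge 1$. Any polynomial-time $\beta$-approximation, for any constant $\beta$ or indeed any finite ratio, would then decide the problem: run it and check whether the returned plan has $d=0$.

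For the hard problem we reuse the $\frac{1}{k}$-Balanced Partition reduction from the proof of Theorem~\ref{NO-APPROX2}, now with the budget as the controlling parameter. Take the same utility matrix with $p(r)=0$ and $c(r)=1$ for all $r\in R$. Set the budget to $B=o=\lfloor n/k\rfloor$ instead of $n+1$; alternatively we may keep $k=2$ and reduce directly from $\frac{1}{2}$-Balanced Partition, which the Claim says is weakly NP-hard.

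By the observations in that proof, any plan achieving $d=0$ has the following form. Agent~2 keeps exactly one of the two $m^*$-type resources, and the other is transferred to Agent~1. Exactly $o$ or exactly $l$ of the ``$a_i+m$'' resources are sold. The sold set sums to $s_\text{half}$ because of the large-$m$ separation of cardinality from value. With budget $o<l$, only the $o$-deletion option remains. So $d=0$ is attainable if and only if $H^*$ admits a subset of size $o$ summing to $s_\text{half}$. Otherwise every feasible AW-derived plan has $d\ge 1$, since all utilities are integers and $p\equiv 0$ gives $q$ no effect.

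The delicate step is checking that the AW-derived plan $P_{\mathrm{AW}}(S_0,\mathcal{I})$, not an arbitrary allocation, realizes the intended transfers. We need AW on $R\setminus S_0$ to start with the $a_i+m$ items at Agent~1 and the $m^*$ items at Agent~2. Phase~2 then transfers one of the last two items in ratio order and halts exactly at equality, with no split required, precisely in the YES case. I would verify this by comparing the ratios $u_a(r)/u_b(r)$ of the two last resources and confirming which one AW moves first. If the order forces the ``wrong'' item, I would swap the roles of $l$ and $o$ in the two last columns, or adjust the budget to $l$ and reverse the argument. A second concern is positive welfare. The plan in which both agents get zero, or the degenerate plans, are excluded by feasibility, and the $m^*$ terms guarantee positive welfare in the intended plans. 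Once $\{0\}$ versus $\{\ge1\}$ is separated, no constant $\beta$ can work unless $\mathrm{P}=\mathrm{NP}$.
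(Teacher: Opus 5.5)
Your argument is correct and is essentially the paper's: the paper also observes that any finite-ratio approximation must return $d=0$ exactly when $d=0$ is attainable, so it would decide the NP-hard question of whether $d=0$ is feasible. The paper simply cites Corollary~\ref{NP-20} for that hardness rather than re-running the reduction. Your change to $B=o$ is harmless but unnecessary, since an $l$-subset summing to $s_\text{half}$ is just the complement of an $o$-subset; and the ordering check you flagged goes through, because both $m^*$-resources are worth $m^*$ to Agent~1, and the one worth $s_\text{half}+om+m^*$ to Agent~2 has the smaller ratio $u_2/u_1$ (ties for $k=2$ are symmetric), so AW moves precisely the item that calls for $o$ sales.
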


{
{
\begin{proof}
    Towards a contradiction, assume such an approximation algorithm $A$ with an approximation ratio \(\beta\) exists. For the whole set of instances of AWNS-\(d\), we can divide them into two sets: \(\mathcal{I}_0\) and \(\mathcal{I}_+\). \(\mathcal{I}_0\) stands for the instances where $d=0$ is feasible and \(\mathcal{I}_+\) stands for the instances where $d=0$ is not feasible.
    
    For any instance \(I\in \mathcal{I}_0\), the output of the approximation algorithm must be 0. Otherwise, the approximation ratio is infinity. For any instance \(I\in \mathcal{I}_+\), the output of the approximation algorithm \(A\) must not be 0. Otherwise, the algorithm is not correct. Thus, the algorithm \(A\) is a polynomial algorithm that would decide whether $d=0$ is feasible for \(I\). However, we already know that AWNS-\(d\) with $d=0$ and \(k<n\) is weakly NP-hard from Corollary \ref{NP-20}. Unless $\mathrm{P}=\mathrm{NP}$, there is a contradiction.
     \end{proof}
}
}

\section{An FPTAS for  $\AWNSR$}
We continue to show an FPTAS for  $\AWNSR$.  A fully polynomial-time approximation scheme (FPTAS) is widely considered a desirable form of approximation algorithm, as it offers a flexible trade-off between computational efficiency and solution accuracy for many optimization problems that are typically regarded as hard.\cite{vazirani2001approximation}
\begin{definition}[FPTAS]
A \emph{fully polynomial-time approximation scheme} (FPTAS) for an optimization problem is an algorithm that, given any instance and any $\varepsilon > 0$, produces a solution with value at most $(1+\varepsilon)$ times the optimal value (for minimization problems), or at least $(1-\varepsilon)$ times the optimal value (for maximization problems), and runs in time polynomial in both the input size and $1/\varepsilon$.
\end{definition}

Before proceeding with the proof, we offer the following remarks to help clarify the subsequent argument.

\begin{remark}
    Here, we rename the set $R$ so that the elements $r_i$ are ordered in descending order according to the ratio $\frac{u_{i1}}{u_{i2}}$. In addition, we introduce two indices, denoted \( i_{\blacktriangleleft} \) and \( i_{\blacktriangleright} \) (with \( i_{\blacktriangleleft} \leq i_{\blacktriangleright} \)), which indicate the positions where transfers or sales must occur.
\end{remark}
\begin{remark}
    Since the AW algorithm does not specify how to choose the transfer sequence when multiple resources share the same value of $\frac{u_{i1}}{u_{i2}}$, we assume that the ratio-sorted list of resources follows a fixed tie-breaking rule: ties are resolved from left to right according to a predefined order (ties are resolved by resource index in increasing order).
\end{remark}

\begin{theorem}\label{AWNS-FPTAS-DP}
    There is an FPTAS for AWNS-${\rho}$.
\end{theorem}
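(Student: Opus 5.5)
Our plan is to reduce \AWNSR{} to a polynomial number of structured knapsack-type subproblems, one per ``shape'' of the AW-derived plan, and to solve each approximately by a profit-scaling dynamic program.

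\textbf{Structural step.} With the resources sorted in descending order of $\frac{u_{i1}}{u_{i2}}$ (with the fixed tie-breaking rule), every AW-derived plan $P_{\mathrm{AW}}(S_0,\mathcal{I})$ has a simple form. Among the unsold resources, a prefix goes to Agent~1 and the remaining suffix goes to Agent~2. The boundary of this split lies between the indices $i_{\blacktriangleleft}$ and $i_{\blacktriangleright}$, which mark where Phase~1 ends and where the Phase~2 transfers stop, either just before a split or once the gap is at most $p(S_0)$.

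We therefore enumerate the $O(m^2)$ guesses for the pair $(i_{\blacktriangleleft},i_{\blacktriangleright})$, together with the identity of the advantaged agent. For each guess, the halting conditions become linear inequalities in the quantities
\[
U_1=u_1(S_1),\qquad U_2=u_2(S_2),\qquad P=p(S_0),
\]
together with the cost constraint $C=c(S_0)\le B$. The welfare ratio is then an explicit function of $(U_1,U_2,P)$: it equals $1$ when $|U_1-U_2|\le P$, and otherwise it is $\frac{\max\{U_1,U_2\}+0}{\min\{U_1,U_2\}+P}$ (up to orientation).

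\textbf{Dynamic program.} For a fixed guess, each resource makes a binary choice: sell it, or keep it on the side dictated by its position relative to the boundary. We scan the resources in sorted order. The DP stores, for each pair of (rounded) values of $U_1$ and $P$, the minimum achievable cost and the corresponding $U_2$. Total welfare is not constant, since sold resources contribute $p(r)\le\max\{u_1(r),u_2(r)\}$, so we instead index the table by rounded $U_1$ and rounded $U_2+P$ and keep the minimum cost.

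We round values geometrically in powers of $(1+\varepsilon/(2m))$, or alternatively round additively with scale $\varepsilon L/m$ after guessing a lower bound $L$ on $\min\{W_1,W_2\}$ among $O(m)$ candidates. This keeps the table polynomial in $m$ and $1/\varepsilon$. The halting conditions (a) and (b) are then checked on the final state of the table; the resources at the boundary indices are forced into their sets by the guess, so they can be checked exactly.

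\textbf{Main obstacle.} The hard step is showing that rounding does not break feasibility of the AW-derived structure. Conditions such as ``transferring the next resource would overshoot'' or ``$|U_1-U_2|\le P$'' are threshold comparisons, and rounding errors can flip them. Our plan is to argue that any plan consistent with the guess can be checked exactly at reconstruction time, by tracing back the DP and recomputing the true values.

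For the rounded optimum, we show that the true ratio differs from the rounded ratio by a factor of at most $(1+\varepsilon)$. The reason is that every rounded quantity is a sum of at most $m$ terms, each carrying relative error at most $\varepsilon/(2m)$, and $\rho$ is monotone in $U_1$ and $U_2+P$ (the same monotonicity used in Proposition~\ref{easysimult}).

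Where a threshold flips, we keep both sides as separate DP branches and verify them exactly. Finally, taking the best plan over all guesses yields a $(1+\varepsilon)$-approximation in time $\mathrm{poly}(m,1/\varepsilon)$.
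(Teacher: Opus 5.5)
Your proposal has a genuine gap, and it sits exactly at the step you flag as the main obstacle; the proposed remedy does not close it.

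\textbf{Why the halting conditions cannot be enforced this way.} Write $a$ for the advantaged agent, $g=U_a-U_b$, and $w(r)=u_a(r)+u_b(r)$. Let $x$ be the last transferred unsold resource of $a$ and $y$ the next one. The guessed split is the AW-derived plan of $S_0$ only if $g\ge 0$, $g+w(x)>P$, and ($g\le P$ or $g<w(y)$). These tests compare a difference of large sums against windows as narrow as a single item. After rounding running sums to relative precision $\varepsilon$, however, $g$ is known only up to about $\varepsilon(U_a+U_b)$. Since the table keeps one minimum-cost representative per rounded state, the optimum and the stored representative can share a state while only the optimum passes the tests.
\begin{itemize}
\item Checking exactly at reconstruction can then only reject the representative. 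It cannot recover the solution that was merged away.
\item ``Keeping both sides of a flipped threshold as separate branches'' is not implementable. Which side a partial solution ends on depends on exact totals the state no longer records, and recording them means tracking $g$ and $g-P$ to additive precision below $\min_r w(r)$, which is pseudo-polynomial.
\item Falling back to the true AW-derived plan of the returned $S_0$ also fails. If its exact $g$ is slightly negative, AW halts before $x$ and agent $a$ keeps $x$. When $g+w(x)>P$ the true ratio is then $\frac{U_a+u_a(x)}{U_b-u_b(x)+P}$, which can be arbitrarily larger than the value the table reports.
\item Fixing the identities of the boundary resources does not make the tests exact, because they involve the global sums.
\end{itemize}
So you have no argument that the output is an AW-derived plan with $\rho\le(1+\varepsilon)\,\mathrm{OPT}$.

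\textbf{Further repairs needed.} A state indexed by $(U_1,U_2+P)$ cannot evaluate $g\ge 0$ or $g<w(y)$. Nor can it evaluate the term $\frac{U_2}{U_1+P}$ of $\rho$ if the orientation flips. You must keep $U_1$, $U_2$, $P$ separately, as the paper's table $T[\mathfrak{i},\mathfrak{o},\mathfrak{u}_1,\mathfrak{u}_2,\mathfrak{p}]$ does. Geometric rounding bounds the table only if you round running sums, since rounding item values to powers of $1+\frac{\varepsilon}{2m}$ does not polynomially bound the number of distinct subset sums. The error then compounds to a factor $(1+\frac{\varepsilon}{2m})^{m}\le e^{\varepsilon/2}$ per coordinate; it is not a per-term error as you describe. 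Your additive alternative with a single scale $\varepsilon L/m$ leaves the $U_a$-coordinate unbounded, since $U_a/L$ can be arbitrarily large.

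\textbf{Comparison with the paper.} With these repairs, your geometric rounding is cleaner than the paper's additive scaling. The paper needs the guess of $m_{\max}$, lower bounds on both welfares in terms of $m_{\max}$, the $u_1$-heavy case and the knapsack subroutine; you would need none of these. But your rounding only approximates the objective over ratio-ordered splits. The paper does not supply your missing step either: its recurrences only force each unsold resource to the correct side of $i^\blacktriangleleft$ or $i^\blacktriangleright$ and never test (a) or (b), so it effectively optimizes over that relaxed family. To finish, you must either justify this relaxation for AWNS-$\rho$ or find a genuinely new way to enforce the halting rules.
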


{
\begin{proof}

 (A) First, we design a dynamic programming  for  AWNS-${\rho}$.  We define a table $T[\mathfrak{i}, \mathfrak{o}, \mathfrak{u}_1, \mathfrak{u}_2, \mathfrak{p}]$, where each entry corresponds to a tuple representing different aspects of the problem. The entries of the tuple are as follows:

\begin{itemize}
    \item $\mathfrak{i}$: Consider the first $\mathfrak{i}$ resources.
    \item $\mathfrak{o}$: The number of resources to be  sold.
    \item $\mathfrak{u}_1$: The current utility for Agent~1 with respect to allocation.
    \item $\mathfrak{u}_2$: The current utility for Agent~2 with respect to allocation.

    \item $\mathfrak{p}$: The current revenue obtained from selling the resources.
\end{itemize}

 With this table, we can systematically explore the different combinations and transitions of resources being allocated and sold, taking into consideration the utilities and costs for both agents, as well as the revenue obtained from sales. $T[\mathfrak{i}, \mathfrak{o}, \mathfrak{u}_1, \mathfrak{u}_2,\mathfrak{p}] = \,\perp\text{(undefined)}$ means that the combination is impossible. If it equals a tuple consisting of three sets $S_0,S_1,S_2$, and a number for the cost $\mathfrak{c}$, i.e. $[S_0, S_1, S_2,\mathfrak{c}]$, then the combination is possible, and these three sets are the solution and $\mathfrak{c}$ is the cost:$c(S_0)$. Notice that the entry  $T[\mathfrak{i},\mathfrak{o},\mathfrak{u}_1,\mathfrak{u}_2,\mathfrak{p}]$  only aims to record the solution with the lowest cost $\mathfrak{c}$.  
Here is the base of the algorithm:

\begin{itemize}
    \item $T[\mathfrak{i},\mathfrak{o},\mathfrak{u}_1,\mathfrak{u}_2,\mathfrak{p}]=\,\perp$ if any of $\mathfrak{i},\mathfrak{o},\mathfrak{u}_1,\mathfrak{u}_2,\mathfrak{p}$ is negative or $\mathfrak{i}<\mathfrak{o}$,
    \item $T[0,0,0,0,0]=[\emptyset,\emptyset,\emptyset,0]$,
    \item $T[0,\mathfrak{o},\mathfrak{u}_1,\mathfrak{u}_2,\mathfrak{p}]=\,\perp$ if any of $\mathfrak{o},\mathfrak{u}_1,\mathfrak{u}_2,\mathfrak{p}\neq 0$.
\end{itemize}

 \begin{remark}
Note that we divide $R$ into $\Rof1$, $\Rof2$, and $\Rof0$: $r\in \Rof0$ if $u_1(r)=u_2(r)$, and $r\in \Rof{j}$ if $u_{j}(r)>u_{(3-j)}(r)$ for $j\in\{1,2\}$. We further define $A_1=\Rof1,\,A_2=\Rof2\cup\Rof0$.
 \end{remark} 
 
\begin{remark}
For each instance $\mathcal{I}$ of AWNS-${\rho}$, there must be a solution $\langle S_0,S_1,S_2\rangle$ minimizing the ratio $\rho^\heartsuit=\frac{W_1}{W_2} (W_1\geq W_2)$ and $c(S_0)\leq B$. Here, we stipulate that for this \( \mathcal{I} \), \( W_1 \) is always greater than \( W_2 \). If this is not possible, we interchange the names of Agent~1 and Agent~2. If no renaming occurs, we define \( A_1 = \Rof1 \) and \( A_2 = \Rof2 \cup \Rof0 \); if renaming occurs, we define \( A_1 = \Rof1 \cup \Rof0 \) and \( A_2 = \Rof2 \).

\end{remark}

The induction rules of transferring from Agent~1 to Agent~2  and transferring from Agent~2 to Agent~1 are given in the following. Additionally, we define two indices, $i^\blacktriangleleft$ and $i^\blacktriangleright$, such that resource transfer is restricted to the interval between them, within which every resource is either transferred or sold - none are retained. During computation, DP algorithm tries out all the possible choices of positions of  $i^\blacktriangleleft$ and $i^\blacktriangleright$ as well as the direction of transfer - whether it is from $A_1$ to $A_2$ (1) or from $A_2$ to $A_1$ (2).

\begin{align}
T[\mathfrak{i},\mathfrak{o},\mathfrak{u}_1,\mathfrak{u}_2,\mathfrak{p}]
=
\begin{cases}
\big[S_0\cup\{r_i\},\,S_1,\,S_2,\,\mathfrak{c}+c_{i0}\big],
& \begin{aligned}[t]
  \text{if }&
  T[\mathfrak{i}-1,\mathfrak{o}-1,\mathfrak{u}_1,\mathfrak{u}_2,\mathfrak{p}-p_i]
  = [S_0,S_1,S_2,\mathfrak{c}]\\
  &\land\ \mathfrak{c}+c_{i0}\le B,
\end{aligned}
\\[4pt]
\big[S_0,\,S_1\cup\{r_i\},\,S_2,\,\mathfrak{c}+c_{i1}\big],
& \begin{aligned}[t]
  \text{if }&
  T[\mathfrak{i}-1,\mathfrak{o},\mathfrak{u}_1-u_{i1},\mathfrak{u}_2,\mathfrak{p}]
  = [S_0,S_1,S_2,\mathfrak{c}]\\
  &\land\ \mathfrak{c}+c_{i1}\le B\\
  &\land\ (r_i\in A_1 \text{ and } i\le i^\blacktriangleleft),
\end{aligned}
\\[4pt]
\big[S_0,\,S_1,\,S_2\cup\{r_i\},\,\mathfrak{c}+c_{i2}\big],
& \begin{aligned}[t]
  \text{if }&
  T[\mathfrak{i}-1,\mathfrak{o},\mathfrak{u}_1,\mathfrak{u}_2-u_{i2},\mathfrak{p}]
  = [S_0,S_1,S_2,\mathfrak{c}]\\
  &\land\ \mathfrak{c}+c_{i2}\le B\\
  &\land\ (r_i\in A_2 \text{ or } i\ge i^\blacktriangleleft).
\end{aligned}
\end{cases}
\end{align}

\begin{align}
T[\mathfrak{i},\mathfrak{o},\mathfrak{u}_1,\mathfrak{u}_2,\mathfrak{p}]
=
\begin{cases}
\big[S_0\cup\{r_i\},\,S_1,\,S_2,\,\mathfrak{c}+c_{i0}\big],
& \begin{aligned}[t]
  \text{if }&
  T[\mathfrak{i}-1,\mathfrak{o}-1,\mathfrak{u}_1,\mathfrak{u}_2,\mathfrak{p}-p_i]
  = [S_0,S_1,S_2,\mathfrak{c}]\\
  &\land\ \mathfrak{c}+c_{i0}\le B,
\end{aligned}
\\[4pt]
\big[S_0,\,S_1\cup\{r_i\},\,S_2,\,\mathfrak{c}+c_{i1}\big],
& \begin{aligned}[t]
  \text{if }&
  T[\mathfrak{i}-1,\mathfrak{o},\mathfrak{u}_1-u_{i1},\mathfrak{u}_2,\mathfrak{p}]
  = [S_0,S_1,S_2,\mathfrak{c}]\\
  &\land\ \mathfrak{c}+c_{i1}\le B\\
  &\land\ (r_i\in A_1 \text{ or } i\le i^\blacktriangleright),
\end{aligned}
\\[4pt]
\big[S_0,\,S_1,\,S_2\cup\{r_i\},\,\mathfrak{c}+c_{i2}\big],
& \begin{aligned}[t]
  \text{if }&
  T[\mathfrak{i}-1,\mathfrak{o},\mathfrak{u}_1,\mathfrak{u}_2-u_{i2},\mathfrak{p}]
  = [S_0,S_1,S_2,\mathfrak{c}]\\
  &\land\ \mathfrak{c}+c_{i2}\le B\\
  &\land\ (r_i\in A_2 \text{ and } i\ge i^\blacktriangleright).
\end{aligned}
\end{cases}
\end{align}

In addition, it follows the following restriction, the Remark \ref{red_restrict}: 
\begin{remark}
\label{red_restrict}
    Assume $T[\mathfrak{i},\mathfrak{o},\mathfrak{u}_1,\mathfrak{u}_2,\mathfrak{p}]= [S_0,S_1,S_2,\mathfrak{c}]$, it will be updated to $[S_0',S_1',S_2',\mathfrak{c'}]$ only if $\mathfrak{c'}<\mathfrak{c}$. This restriction ensures that the entry  $T[\mathfrak{i},\mathfrak{o},\mathfrak{u}_1,\mathfrak{u}_2,\mathfrak{p}]$  only records the solution with the lowest cost. If $T[\mathfrak{i},\mathfrak{o},\mathfrak{u}_1,\mathfrak{u}_2,\mathfrak{p}]= \perp\text{(undefined)}$, we treat its cost as a very large number and an update takes place only when the candidate cost is strictly less than the current stored cost.  This restriction also applies for the inductions rule in the following proofs.
\end{remark}

\begin{remark}
Here, $\langle S_0,S_1,S_2\rangle$ is the solution minimizing the ratio $\rho^\heartsuit=\frac
{W_1}{W_2}$ of the instance $\mathcal{I}$. There must be a resource with index $j_1$ for $S_1$ such that $u_{j_1 1}=\max\{u_{i1}\in S_1\}=u_{\max,1}$ and a resource with index $j_2$ for $S_2$ such that $u_{j_2 2}=\max\{u_{i2}\in S_2\}=u_{\max,2}$, and a resource with index $j_0$ for $S_0$ such that $p_{j_0}=\max\{p_{i}\in S_0\}=p_{\max}$. Finally, we have $m_{\max}=\lceil\frac{1}{2}\max\{p_{\max}, u_{\max,1},u_{\max,2}\}\rceil$.
\end{remark}

\begin{remark}
    We assume that the ratio-sorted list of resources is equipped with a fixed tie-breaking rule: when multiple resources have the same ratio, they are ordered from left to right according to a predefined order. The AW algorithm does not strictly define how to choose the transfer sequence when  different resources have the same  $\frac{u_{i1}}{u_{i2}}$ value.
\end{remark}

(B) We define the ceiling of half of the maximum value that occurs in the optimal solution: $m_{\max}$. Here, $\mathcal{S}=\langle S_0,S_1,S_2\rangle$ is the solution minimizing the ratio $\rho^\heartsuit=\frac
{W_1}{W_2}$ of the instance $\mathcal{I}$. In this proof, we assume $W_1\geq W_2$. (If this is not possible, swap the roles of Agent~1 and Agent~2 and consider both Phase~1 variants in which all utilities are initially assigned to Agent~1 or to Agent~2 (yielding four cases in total).)  There must be a resource with index $j_1$ for $S_1$ such that $u_{j_1 1}=\max\{u_{i1}\in S_1\}=u_{\max,1}$ and a resource with index $j_2$ for $S_2$ such that $u_{j_2 2}=\max\{u_{i2}\in S_2\}=u_{\max,2}$, and a resource with index $j_0$ for $S_0$ such that $p_{j_0}=\max\{p_{i}\in S_0\}=p_{\max}$. Finally, we have $m_{\max}=\lceil\frac{1}{2}\max\{p_{\max}, u_{\max,1},u_{\max,2}\}\rceil$.

We apply scaling to the parameters and carefully choose whether to round them up or down, ensuring that the size of the dynamic programming table remains within polynomial bounds. Here, the selection of upper and lower bounds for scaling
is rigorous to ensure the validity of the subsequent inequalities. The numerical magnitude of the scaled values must be just right; if too large, they may not be polynomially bounded, and if too small, the
inequalities in  (C) may become difficult or impossible to prove. For the given instance $\mathcal{I}$, we rescale it in the following way to get a new instance  $\mathcal{\hat{I}}$:

\begin{enumerate}
    \item For each $i\in\{1,\ldots,n\}$: if $u_{i1}\leq u_{\max,1}$, then $\hat{u_{i1}}\leftarrow \lceil \frac{u_{i1}\cdot n}{\epsilon' \cdot m_{\max}} \rceil$; otherwise, then $\hat{u_{i1}}\leftarrow \infty$; 
       \item  For each $i\in\{1,\ldots,n\}$: if $u_{i2}\leq u_{\max,2}$, then $\hat{u_{i2}}\leftarrow \lfloor\frac{u_{i2}\cdot n}{\epsilon' \cdot m_{\max}} \rfloor$; otherwise, then $\hat{u_{i2}}\leftarrow -\infty$;
       \item  For each $i\in\{1,\ldots,n\}$: if $p_{i}\leq p_{\max}$, then $\hat{p_i}\leftarrow \lfloor\frac{p_{i}\cdot n}{\epsilon' \cdot m_{\max}} \rfloor$; otherwise, then $\hat{p_{i}}\leftarrow -\infty$.
        \item  For each $i\in\{1,\ldots,n\}$: if $p_{i}\leq p_{\max}$, then $\hat{p_{i+}}\leftarrow \lceil\frac{p_{i}\cdot n}{\epsilon' \cdot m_{\max}} \rceil$; otherwise, then $\hat{p_{i+}}\leftarrow \infty$.

    \end{enumerate}

(C)  Among all the solutions $\langle S_0^+,S_1^+,S_2^+\rangle$ satisfying the constraint: $$\lceil\frac{1}{2}\max\{\max\limits_{r_1\in {S_1^+}}\{u_{i1}\},\max\limits_{r_i\in {S_2^+}}\{u_{i2}\},\max\limits_{r_i\in {S_0^+}}\{p_{i}\}\}\rceil=m_{\max} \quad (\bigstar)$$ ($m_{\max}$ is still the ceiling of half of the maximum value that occurs in the optimal solution $\mathcal{S}=\langle S_0,S_1,S_2\rangle$), let $\mathcal{S'}=\langle{S_0'},{S_1'},{S_2'}\rangle$ be the optimal one for the scaled instance $\mathcal{\hat{I}}$.

Let $\rho^\heartsuit$ be the ratio of $\mathcal{S}$ wrt. $\mathcal{I}$,  $\rho^\diamondsuit$ be the ratio of $\mathcal{S}$ wrt. $\mathcal{\hat{I}}$, Let $\rho^\clubsuit$ be  the ratio of $\mathcal{S'}$ wrt. $\mathcal{I}$ and $\rho^\spadesuit$ be  the ratio of $\mathcal{S'}$ wrt. $\mathcal{\hat{I}}$.

\begin{table}[]
    \centering
    \begin{tabular}{|c|c|c|}
        \hline
              &  \makecell{ $\mathcal{I}$\\(Original Instance)} &  \makecell{$\mathcal{\hat{I}}$\\(Scaled Instance)  }\\
        \hline
         \makecell{ $\mathcal{S}$\\ (Optimal Solution of\\ Original Instance)} & $\rho^\heartsuit$  & $\rho^\diamondsuit$ \\
         \hline
         \makecell{ $\mathcal{S'}$ \\ (Optimal Solution of\\ Scaled Instance s.t $(\bigstar)$} & $\rho^\clubsuit$ & $\rho^\spadesuit$\\
        \hline

    \end{tabular}
\end{table}

 \begin{remark}
          Given the instance $\mathcal{I}$, the function $O_2:2^R\rightarrow \mathbb{N}^+$ refers to the optimal welfare for Agent~2 given $R_S\subseteq R$. The resources in $R_S$ are either allocated to Agent~2 or sold. In addition, Agent~2 will get all the revenue.
      \end{remark}

Proceeding forward, to prove that ${\rho^\clubsuit}\leq (1+\epsilon){\rho^\heartsuit}$, we will distinguish between two cases: one in which no resource $r_i\in R$ with $u_{i1}> 2O_2(R\backslash\{r_i\})$, and the other in which such a resource is present. We refer to such a resource as a \emph{$u_1$-heavy resource}. The two cases correspond to Lemma~\ref{TRatio1} and Lemma~\ref{TRatio2}, respectively.

  \begin{lemma}\label{TRatio1}

If there is no resource $r_i\in R$ with $u_{i1}> 2O_2(R\backslash\{r_i\})$ (no {$u_1$-heavy resource}), ${\rho^\clubsuit}\leq (1+\epsilon){\rho^\heartsuit}$. 
\end{lemma}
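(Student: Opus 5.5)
We compare the true ratio $\rho^\clubsuit$ of $\mathcal{S'}$ with the true ratio $\rho^\heartsuit$ of $\mathcal{S}$ through the scaled instance. The chain of inequalities we aim for is
\[
\rho^\clubsuit \;\le\; (1+\epsilon_1)\,\rho^\spadesuit \;\le\; (1+\epsilon_1)\,\rho^\diamondsuit \;\le\; (1+\epsilon_1)(1+\epsilon_2)\,\rho^\heartsuit ,
\]
with $\epsilon'$ chosen so that $(1+\epsilon_1)(1+\epsilon_2)\le 1+\epsilon$.

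The middle inequality $\rho^\spadesuit\le\rho^\diamondsuit$ is free. $\mathcal{S}$ satisfies $(\bigstar)$ by the definition of $m_{\max}$, so it is a candidate in the scaled problem, and $\mathcal{S'}$ is optimal among such candidates. We also note that the capped entries ($\pm\infty$) are never used by $\mathcal{S}$. This is because every item of $S_1$ has $u_{i1}\le u_{\max,1}$, and similarly for $S_2$ and $S_0$. Every solution satisfying $(\bigstar)$ with finite scaled welfare also avoids them. Consequently, every scaled quantity that is actually used differs from $\frac{n}{\epsilon' m_{\max}}$ times its original value by less than $1$ per item.

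\textbf{The two outer inequalities.} Write $K=\frac{n}{\epsilon' m_{\max}}$. The rounding directions are chosen to be pessimistic: $u_1$ is rounded up and $u_2$ and $p$ are rounded down, together with the optimistic copy $\hat p_{+}$. For any plan satisfying $(\bigstar)$ we therefore get
\[
K W_1 \le \hat W_1 \le K W_1 + n \quad\text{and}\quad K W_2 - n \le \hat W_2 \le K W_2 .
\]
This is additive error at most $n$ per agent, since there are at most $n$ items and the revenue share enters linearly. Care is needed to transfer the redistribution of revenue, i.e.\ the share $q$, between the two instances. Here we use Proposition~\ref{easysimult}: for a fixed partition, the balancing $q$ is the optimal one for the ratio. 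It therefore suffices to bound the ratio under any fixed $q$ and then take the optimum.

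Dividing, we obtain
\[
\rho^\clubsuit \le \frac{\hat W_1(\mathcal{S'})/K}{(\hat W_2(\mathcal{S'})-n)/K} \quad\text{and}\quad \rho^\diamondsuit \le \frac{W_1+n/K}{W_2-n/K}.
\]
Both are of the form $(1+O(n/(K W_2)))$ times the exact ratio. Since $n/K=\epsilon' m_{\max}$, what we need is a lower bound
\[
W_2 \;\ge\; c\, m_{\max}
\]
for a constant $c$, for both $\mathcal{S}$ and $\mathcal{S'}$.

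\textbf{The main obstacle} is establishing $W_2=\Omega(m_{\max})$, and this is exactly where the no-$u_1$-heavy hypothesis enters. The problematic case is when $m_{\max}$ comes from $u_{\max,1}$, the largest item held by Agent~1. In the other cases the item is in $S_2$ or in $S_0$, and we use that Agent~2, the poorer agent, receives enough of the revenue to get welfare about $p_{\max}/2$, or else Agent~1 is already balanced against it.

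Suppose $m_{\max}$ comes from $u_{\max,1}$, attained by $r_{j_1}$. The hypothesis gives
\[
u_{j_1 1}\le 2\,O_2(R\setminus\{r_{j_1}\}).
\]
We argue by optimality of $\mathcal{S}$ (and by the $(\bigstar)$-restricted optimality of $\mathcal{S'}$ in the scaled instance) that Agent~2's welfare cannot be far below $O_2(R\setminus\{r_{j_1}\})/2$. Otherwise an alternative AW-derived plan would give a strictly better ratio. The alternative plan gives Agent~2 the items or sales realizing $O_2$, while Agent~1 keeps $r_{j_1}$.

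I expect this exchange argument to be the delicate part. It must respect the AW-derived structure, namely the interval between $i^\blacktriangleleft$ and $i^\blacktriangleright$ and the budget $B$. I would first try to show it directly from $\rho^\heartsuit\le u_{j_1 1}/O_2(\cdot)$-type comparisons. If that fails, I would fall back on choosing $\epsilon'=\epsilon/c'$ with a larger constant $c'$, which absorbs a weaker bound such as $W_2\ge m_{\max}/4$.

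Finally, the running time is polynomial. All scaled values that are used are at most $O(n/\epsilon')$, so the DP table has polynomially many entries in $n$ and $1/\epsilon$. We also guess $m_{\max}$ by trying each of the $O(n)$ candidate values, together with the four agent and Phase~1 cases.
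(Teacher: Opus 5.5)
Most of your architecture matches the paper: the pessimistic/optimistic rounding, $\rho^\spadesuit\le\rho^\diamondsuit$ by optimality, additive error $\epsilon' m_{\max}$ per agent for $\mathcal{S}$, and the balanced-or-revenue argument when $m_{\max}$ comes from $S_2$ or $S_0$. The first link, however, is mishandled. Apply your own bounds $KW_1\le\hat W_1$ and $\hat W_2\le KW_2$ to $\mathcal{S'}$ with the scaled share $\hat q_1$, then use Proposition~\ref{easysimult} to pass to the optimal share. Scale invariance of the ratio gives $\rho^\clubsuit\le\hat W_1(\mathcal{S'})/\hat W_2(\mathcal{S'})=\rho^\spadesuit$ with no error term. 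This is exactly what the asymmetric rounding is for, and it is how the paper treats $\mathcal{S'}$. Your bound $\rho^\clubsuit\le\frac{\hat W_1/K}{(\hat W_2-n)/K}$ uses the lossy side of your sandwich: the correct lower bound on $W_2(\mathcal{S'})$ is $\hat W_2/K$. That slip is what makes you demand $W_2(\mathcal{S'})\ge c\,m_{\max}$. The demand is unnecessary, and your route to it would fail anyway. The only control you have on $\mathcal{S'}$ is that its scaled ratio is at most $\rho^\diamondsuit\le(1+\epsilon)\rho^\heartsuit$, which is not bounded by a constant. So $(\bigstar)$-restricted optimality gives no constant-factor lower bound on $W_2(\mathcal{S'})$ in terms of $m_{\max}$. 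The lower bounds (a) $W_1\ge m_{\max}$ and (b) $W_2\ge m_{\max}$ are needed only for the true optimum $\mathcal{S}$.

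For $\mathcal{S}$ itself, the case you single out ($q=0$, $m_{\max}$ from $u_{\max,1}$) is a genuine gap, and your exchange argument cannot close it for AW-derived plans. The plan where Agent~1 keeps $r_{j_1}$ and Agent~2 takes an $O_2$-realization of the rest is in general not AW-derived, and without it the bound is false. Take $p\equiv 0$, $c\equiv 1$, $B=0$, and three resources with $u_1=(2N,N+1,0)$, $u_2=(2N-1,N,2)$, $N\ge 2$. Phase~1 gives $\{r_1,r_2\}$ to Agent~1. Phase~2 would move $r_1$ first, since its ratio $1+\frac{1}{2N-1}$ is smaller than $1+\frac{1}{N}$, and this overshoots because $N+1<2N+1$. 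So the unique AW-derived plan has $W_1=3N+1$, $W_2=2$, and $m_{\max}=N$. No resource is $u_1$-heavy, since $2O_2(\{r_2,r_3\})=2N+4\ge 2N$ and $2O_2(\{r_1,r_3\})=4N+2\ge N+1$. Hence $W_2/m_{\max}=2/N\to 0$, so neither your exchange argument nor the fallback with a larger constant $c'$ can work. Worse, once $N>6/\epsilon'$ the scaled copy of $\mathcal{S}$ gives Agent~2 welfare $\lfloor 6/(\epsilon' N)\rfloor=0$, so the link $\rho^\diamondsuit\le(1+\epsilon)\rho^\heartsuit$ itself breaks. The paper does not supply this argument either: its Case~3.1 only asserts that (a) and (b) hold ``immediately'' when there is no $u_1$-heavy resource. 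This step therefore needs a genuinely different idea, not a sharper exchange argument.
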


\begin{proof}
  Let ${\rho^\heartsuit}=\frac{\left(\sum\limits_{r_i\in S_1} u_{i1} +  q \left(\sum\limits_{r_i\in S_0} p_{i}\right)\right)}{\left(\sum\limits_{r_i\in S_2} u_{i2}+(1-q) \left(\sum\limits_{r_i\in S_0} p_{i}\right)\right)}$ be the optimal ratio of $\mathcal{I}$ and $S_i$ be the set of resources allocated to agent $i$ and $S_0$ is the set of resources to be sold. We set: 
  \begin{itemize}
      \item $K=\frac{\epsilon'\cdot m_{\max}}{n}$ (Scaling Parameter),
      \item $\epsilon'=\frac{\epsilon}{\epsilon+2}$ (so that  $1+\epsilon=\frac{1+\epsilon'}{1-\epsilon'}$),
      \item $U_1=\sum\limits_{i\in S_1} u_{i1}$ (the utility for Agent~1 from $S_1$), 
      \item $U_2=\sum\limits_{i\in S_2} u_{i2}$ (the utility for Agent~2 from $S_2$), 
      \item $P=\sum\limits_{i\in S_0} p_{i}$ (the revenue from $S_0$),
      \item $\hat{U_1}=\sum\limits_{i\in S_1} \hat{u_{i1}}$ (the scaled version for $U_1$),
      \item $\hat{U_2}=\sum\limits_{i\in S_2} \hat{u_{i2}}$ (the scaled version for $U_2$),
      \item $\hat{P}=\sum\limits_{i\in S_0} \hat{p_{i}}$ (the scaled version for $P$, round down),
      \item $\hat{P_{+}}=\sum\limits_{i\in S_0} \hat{p_{i+}}$ (the scaled version for $P$, round up), 
      \item $\rho^\heartsuit=\frac{U_1+qP}{U_2+(1-q)P}=\frac{W_1}{W_2}$,
      \item $\rho^\diamondsuit=\frac{\hat{U_1}+\hat{q}\hat{P}}{\hat{U_2}+(1-\hat{q})\hat{P}}$. (Here, the choice of $\hat{q}\in [0,1]$ minimizes  $\rho^\diamondsuit$. Thus, $\frac{\hat{U_1}+\hat{q}\hat{P}}{\hat{U_2}+(1-\hat{q})\hat{P}}\leq \frac{\hat{U_1}+{q}\hat{P}}{\hat{U_2}+(1-{q})\hat{P}}$.)
  \end{itemize} 
  Following the aforementioned definitions, we define $U_1',U_2',P'$ and $\hat{U_1'},\hat{U_2'},\hat{P'},\hat{P_+'}$ for the solution $\mathcal{S'}=\langle S_0',S_1',S_2'\rangle$, with the key aspect being the replacement of $S_0,S_1,S_2$ with $S_0',S_1',S_2'$ respectively. Here, we have:
  \begin{itemize}
           \item $\rho^\clubsuit=\frac{U_1'+q_1 P'}{U_2'+(1-q_1)P'}$,  (Here, the choice of ${q_1}\in [0,1]$ minimizes  $\rho^\clubsuit$.)
      \item $\rho^\spadesuit=\frac{\hat{U_1'}+\hat{q_1}\hat{P_+'}}{\hat{U_2'}+(1-\hat{q_1})\hat{P'}}$ (Here, the choice of $\hat{q_1}\in [0,1]$ minimizes  $\rho^\spadesuit$.)
  \end{itemize}
  
  We are going to prove that: \begin{enumerate}
      \item $K\cdot (\hat{U_1}+q\hat{P_{+}})\leq (1+\epsilon')(U_1+qP)$ 
      \begin{align*}
          &K\cdot (\hat{U_1}+q\hat{P_{+}}) \\=& K\cdot \left(\sum\limits_{i\in S_1} \hat{u_{i1}}+\sum\limits_{i\in S_0} q\hat{p_{i+}}\right)\\ \leq&
          K\cdot \left(\frac{1}{K}\sum\limits_{i\in S_1} {u_{i1}}+\frac{1}{K}(\sum\limits_{i\in S_0} {qp_{i}})+n\right)\\=& \sum\limits_{i\in S_1} {u_{i1}}+\sum\limits_{i\in S_0} {qp_{i}}+\epsilon'\cdot m_{\max}\\
          \leq&^{(a)} \left(\sum\limits_{i\in S_1} {u_{i1}}+\sum\limits_{i\in S_0} {qp_{i}}\right) (1+\epsilon')
        \\=& (1+\epsilon')(U_1+qP).
         \end{align*}
     \item $K(\hat{U_2}+(1-q)\hat{P})\geq(1-\epsilon')(U_2+(1-q)P)$.
          \begin{align*}
         & K\cdot (\hat{U_2}+(1-q)\hat{P})\\=& K\cdot \left(\sum\limits_{i\in S_2} \hat{u_{i1}}+\sum\limits_{i\in S_0}(1-q) \hat{p_{i}}\right)\\ \geq&
          K\cdot \left(\frac{1}{K}\sum\limits_{i\in S_2} {u_{i2}}+\frac{1}{K}(\sum\limits_{i\in S_0} {(1-q)p_{i}})-n\right)\\=&\sum\limits_{i\in S_2} {u_{i2}}+\sum\limits_{i\in S_0} {(1-q)p_{i}}-\epsilon'\cdot m_{\max}\\
          \geq&^{(b)} \left(\sum\limits_{i\in S_2} {u_{i2}}+\sum\limits_{i\in S_0} {(1-q)p_{i}}\right) (1-\epsilon')\\
          =& (1-\epsilon')(U_2+(1-q)P).
          \end{align*}
  \end{enumerate}

 Here, we set $\epsilon=\frac{2\epsilon'}{1-\epsilon'}$, so that $\epsilon'=\frac{\epsilon}{\epsilon+2}$. Thus,
\begin{align*}
  \rho^\diamondsuit &= \frac{\hat{U_1} + \hat{q}\hat{P}}{\hat{U_2} + (1-\hat{q})\hat{P}} \\
  &\leq \frac{\hat{U_1} + q\hat{P}}{\hat{U_2} + (1-q)\hat{P}} \\
  &\leq  \left(\frac{1+\epsilon'}{1-\epsilon'}\right)\frac{U_1 + qP}{U_2 + (1-q)P}  \\
  &= \left(\frac{1+\epsilon'}{1-\epsilon'}\right) \rho^\heartsuit\\
  &= (1+\epsilon)\rho^\heartsuit  
\end{align*}

  Now, we need to make case distinction to prove the inequalities $(a)$ and $(b)$ hold: (1) $0<q\leq 1$ or (2) $q=0$ and $m_{\max}=\frac{1}{2}u_{\max,2}$ or  $m_{\max}=\frac{1}{2}p_{\max}$  or (3) $q=0$ and $m_{\max}=\frac{1}{2}u_{\max,1}$.
  \begin{itemize}
      \item Case 1: Since $0<q\leq 1$, $\rho^\heartsuit=1$, we have $U_1+qP=U_2+(1-q)P>m_{\max}$, which means here two agents have the same welfare. In this case, the revenue is split into two parts, which means that the gap between the profits of two agents is smaller than the revenue of selling resources. 
      \item Case 2: Since $q=0$  and $U_1+qP\geq U_2+(1-q)P$, we have $U_1>U_2+P$,  which means here Agent~1 has a higher welfare than Agent~2 even when the gap is smallest. Since $m_{\max}=\frac{1}{2}u_{\max,2}$ or  $m_{\max}=\frac{1}{2}p_{\max}$ or $m_{\max}=\frac{1}{2}x$, $U_1>U_2+P>m_{\max}$. This is because, in this case, $u_{\max,2},p_{\max},x$ are parts of composition of Agent~2's welfare.
 
      \end{itemize}
     
      \begin{itemize}
      \item Case 3: $q=0$ and $m_{\max}=\frac{1}{2}u_{\max,1}$.
      \item Case 3.1: The inequality (a) (b) holds immediately when there is no resource $r_i$ with $u_{i1}> 2O_2(R\backslash\{r_i\})$ (no $u_1$-heavy resource). 
      
  \end{itemize}

Again, we show that: 
\begin{align*}
\rho^\diamondsuit\leq {\rho}^\heartsuit(1+\epsilon)
\end{align*}

According to the optimality from the definition of $\rho^\spadesuit$, we have:
\begin{align*}
  \rho^\spadesuit\leq \rho^\diamondsuit  
\end{align*}
Clearly, due to the proper selection of rounding up and down, we have:
\[
  \rho^\clubsuit=\frac{U_1'+q_1P'}{U_2'+(1-q_1)P'}\leq \frac{U_1'+\hat{q_1}P'}{U_2'+(1-\hat{q_1})P'} \leq 
\]
\[
\frac{\hat{U_1'}+\hat{q_1}\hat{P_+'}}{\hat{U_2'}+(1-\hat{q_1})\hat{P'}} =\rho^\spadesuit  
\]

Thus, we complete that:
\begin{align*}
\rho^\clubsuit\leq\rho^\spadesuit\leq\rho^\diamondsuit\leq {\rho}^\heartsuit(1+\epsilon).
\end{align*}
\end{proof}

(D) It remains to consider the case where there exists a {$u_1$-heavy resource}. In this case, we still have ${\rho^\clubsuit} \leq (1+\epsilon){\rho^\heartsuit}$.
Please notice that if such resource $r_i$ with $u_{i1}> 2O_2(R\backslash\{r_i\})$ exists, there can be at most one. 
\begin{lemma}
     There exists at most one {$u_1$-heavy resource} in every instance.
\end{lemma}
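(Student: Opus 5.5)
Suppose, for contradiction, that two distinct resources $r_i \neq r_j$ are both $u_1$-heavy. That is,
\[
u_{i1} > 2\,O_2(R\setminus\{r_i\}) \quad\text{and}\quad u_{j1} > 2\,O_2(R\setminus\{r_j\}).
\]
The approach combines two monotonicity facts about $O_2$ with a comparison between $u_1$-values and Agent~2's optimal welfare.

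First, $O_2$ is monotone under inclusion. If $R_S\subseteq R_S'$, any arrangement of $R_S$ (allocate to Agent~2 or sell, with all revenue going to Agent~2) extends to $R_S'$: the extra resources are simply sold or allocated, and every contribution is nonnegative. Since $r_j\in R\setminus\{r_i\}$, we get
\[
O_2(R\setminus\{r_i\}) \ge O_2(\{r_j\}) \ge \max\{u_{j2},p_j\}.
\]
Symmetrically, $O_2(R\setminus\{r_j\}) \ge \max\{u_{i2},p_i\}$.

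Second, and this is the main obstacle, we need to relate $u_{j1}$ to Agent~2's achievable welfare. Here I use the normalization $\sum_r u_1(r)=\sum_r u_2(r)$. Because $r_i$ and $r_j$ are distinct and all utilities are nonnegative,
\[
u_{i1}+u_{j1} \le \sum_{r} u_1(r) = \sum_r u_2(r).
\]
Moreover, $O_2(R\setminus\{r_i\}) \ge u_2(R\setminus\{r_i\}) = \sum_r u_2(r) - u_{i2}$, by allocating everything to Agent~2. Hence
\[
u_{i1} > 2\big(\textstyle\sum_r u_2(r) - u_{i2}\big) \ge 2\,(u_{i1}+u_{j1}-u_{i2}).
\]
This gives $2u_{i2} > u_{i1} + 2u_{j1}$, and symmetrically $2u_{j2} > u_{j1}+2u_{i1}$. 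Adding the two inequalities yields
\[
2(u_{i2}+u_{j2}) > 3(u_{i1}+u_{j1}).
\]

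To close the argument, I combine this with the direct bounds from heaviness. Since $O_2(R\setminus\{r_i\}) \ge \sum_r u_2(r) - u_{i2} \ge u_{j2}$, we have $u_{i1} > 2u_{j2}$, and likewise $u_{j1}>2u_{i2}$. Adding these gives
\[
u_{i1}+u_{j1} > 2(u_{i2}+u_{j2}) > 3(u_{i1}+u_{j1}).
\]
This forces $u_{i1}+u_{j1}<0$, which contradicts nonnegativity of utilities. Strictness is genuine here: heaviness gives $u_{i1}>0$, so the inequalities cannot all collapse to equality at zero.

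The delicate point is the step $O_2(R\setminus\{r_i\}) \ge u_{j2}$, together with the decision to sum the two heaviness conditions rather than work with them separately. Once both heaviness conditions are stated with strict inequality, the chain above is routine.
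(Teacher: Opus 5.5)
Your proof is correct and uses the same ingredients as the paper's: lower-bound each $O_2(R\setminus\{r\})$ by allocating everything to Agent~2, use $u_1(R)=u_2(R)$, and sum the two heaviness conditions to contradict nonnegativity. The paper does this in one chain, $u_{x1}+u_{y1} > 2\bigl(2u_2(R)-u_{x2}-u_{y2}\bigr) \ge 2u_2(R) = 2u_1(R) \ge 2(u_{x1}+u_{y1})$, so your two-stage derivation is longer than needed; your preliminary bound $O_2(\{r_j\})\ge p_j$ (which could fail if $O_2$ respects the budget and $c_j>B$) is never used, so it does no harm.
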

\begin{proof}
    Assume that there are a pair of resources $r_x,r_y$ satisfying $u_{x1}> 2O_2(R\backslash\{r_x\})$ and $u_{y1}> 2O_2(R\backslash\{r_y\})$. Thus, $u_{x1}+u_{y1}>2(u_{x2}+u_{y2}+2\sum\limits_{i\in R\backslash\{x,y\}}u_i)>2(u_2(R))=2M$. We know $u_{x1}+u_{y1}<u_1(R)=M$. Contradiction.
     \end{proof}

\begin{lemma}
\label{TRatio2}
     Even if there is a {$u_1$-heavy resource} $r_i$, ${\rho^\clubsuit}\leq (1+\epsilon){\rho^\heartsuit}$ still holds. 
\end{lemma}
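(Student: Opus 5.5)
The plan is to reuse the chain of inequalities from Lemma~\ref{TRatio1},
\[
\rho^\clubsuit\le\rho^\spadesuit\le\rho^\diamondsuit\le(1+\epsilon)\rho^\heartsuit .
\]
Only the middle steps $(a)$ and $(b)$ used the absence of a $u_1$-heavy resource. The first and last links, $\rho^\clubsuit\le\rho^\spadesuit$ (from the rounding directions) and $\rho^\spadesuit\le\rho^\diamondsuit$ (optimality of $\mathcal{S'}$ under $(\bigstar)$), hold verbatim. The cases $0<q\le 1$ and $q=0$ with $m_{\max}$ attained by $u_{\max,2}$ or $p_{\max}$ (Cases~1 and~2) were also argued without the no-heavy assumption. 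So it suffices to handle Case~3: $q=0$ and $m_{\max}=\lceil u_{\max,1}/2\rceil$ attained by some resource of $S_1$, in the presence of the unique $u_1$-heavy resource $r_h$, which exists by the preceding lemma.

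\textbf{Step 1: $r_h\in S_1$ and it realizes $u_{\max,1}$.} Suppose $r_h\notin S_1$. Then every resource of $S_1$ is non-heavy. If the maximum $u_{\max,1}$ is attained by a non-heavy resource, the argument of Case~3.1 applies unchanged, because the bounds in $(a),(b)$ only involve the resources of $S_0,S_1,S_2$, and these contain no heavy element that is charged to $m_{\max}$. So assume $r_h\in S_1$ and $u_{h1}=u_{\max,1}$.

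\textbf{Step 2: the optimal plan in this situation.} Agent~2's welfare is at most $O_2(R\setminus\{r_h\})<u_{h1}/2$, so $\rho^\heartsuit>2$ and $q=0$. Consider the plans in which $r_h$ goes to Agent~1. Agent~1's welfare is then $u_{h1}+u_1(S_1\setminus\{r_h\})$, and the best such plan minimizes $u_1(S_1\setminus\{r_h\})$ while maximizing $W_2$. For~$(a)$, the additive error $\epsilon' m_{\max}\le \epsilon' u_{h1}/2+\epsilon'$ is at most $\epsilon' W_1$, since $W_1\ge u_{h1}$. Thus $(a)$ survives as in the general case.

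\textbf{Step 3: the main obstacle, inequality $(b)$.} The real problem is $(b)$: $W_2=U_2+P$ may be much smaller than $m_{\max}\approx u_{h1}/2$, so an additive error $\epsilon' m_{\max}$ on Agent~2's side is not relative. To fix this, I would scale the denominator separately. Treat $r_h$ as fixed in $S_1$, contributing exactly $u_{h1}$ without rounding (it is a single item, so the DP can guess it and store it exactly). Scale the remaining $u_{i1},u_{i2},p_i$ by $K_2=\epsilon' m'_{\max}/n$, where $m'_{\max}$ is half the largest value among the other resources in the optimal plan, which the DP also guesses.

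Since $r_h$ is unrounded, the ratio error comes only from the rest. With respect to $m'_{\max}$ there is no heavy item left: every remaining value is at most $O_2(R\setminus\{r_h\})$-comparable, or else lies in $S_2\cup S_0$ and so bounds $W_2$ from below. Cases~1--3.1 then give $(1\pm\epsilon')$ factors for both numerator and denominator. Combining with the exact term $u_{h1}$, which only dilutes the relative error in the numerator, yields $\rho^\diamondsuit\le\frac{1+\epsilon'}{1-\epsilon'}\rho^\heartsuit=(1+\epsilon)\rho^\heartsuit$. The chain above then completes the proof. The delicate point to verify is that any remaining resource in $S_1$ with $u_{i1}>2W_2$ still lets the numerator bound go through: its error is relative to itself, since it lies in the numerator, and $(b)$ involves only $S_2\cup S_0$, whose maximum defines $m'_{\max}$ up to a factor of two.
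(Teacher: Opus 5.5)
\textbf{Gap in Step~3.} Your Step~2 is fine: $q=0$, $\rho^\heartsuit>2$, and $(a)$ survives because $W_1\ge u_{h1}$. Step~3, however, fails at exactly the point you flag.

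You define $m'_{\max}$ as half the largest value among \emph{all} non-heavy resources of the optimal plan, so it includes $u_{j1}$ for $r_j\in S_1\setminus\{r_h\}$. Nothing bounds such a $u_{j1}$ by $W_2$. Non-heaviness only gives $u_{j1}\le 2O_2(R\setminus\{r_j\})$, and $O_2(\cdot)$ is an \emph{upper} bound on Agent~2's welfare, not a lower bound. When $m'_{\max}\gg W_2$, floor-rounding on $S_2\cup S_0$ can lose up to $nK_2=\epsilon' m'_{\max}$, which is not at most $\epsilon' W_2$, so $(b)$ fails.

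This happens for AW-derived plans. Take $(u_1,u_2)$ values $r_h=(800000,799999)$, $r_j=(200000,199990)$, $r_k=(0,11)$, all prices $0$, and all costs above $B$. Then $O_2(R\setminus\{r_h\})=200001$, so $r_h$ is heavy. Phase~1 gives $r_h$ and $r_j$ to Agent~1, and transferring $r_h$ (the lowest ratio) overshoots. The unique plan therefore has $S_1=\{r_h,r_j\}$, $W_2=11$ and $m'_{\max}=100000$. With $n=3$ you get $\hat u_{k2}=\lfloor 33/(10^5\epsilon')\rfloor=0$ for every $\epsilon'>3.3\cdot 10^{-4}$, so $\hat W_2=0$.

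Your closing sentence, that the maximum over $S_2\cup S_0$ defines $m'_{\max}$ up to a factor of two, contradicts your own definition. If you instead adopt it as the definition, the $S_1$ values scaled by the same $K_2$ are no longer polynomially bounded, and the table-size argument breaks.

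\textbf{How the paper proceeds, and a repair.} The paper does not repair the scaling; it splits on the fate of $r_h$.
\begin{itemize}
\item If $r_h\in S_1$, all remaining resources are sent to Agent~2 or sold. This is the $O_2$ problem, which the paper shows equivalent to 0/1-Knapsack, so a knapsack FPTAS handles it.
\item If $r_h$ is sold, it is replaced by a revenue-only dummy $r_h^\star$ with zero cost, $B$ is lowered by $c_h$, and the step is repeated until no heavy resource remains, at which point Lemma~\ref{TRatio1} applies.
\end{itemize}
That recursion is also what your Step~1 lacks. There you apply Case~3.1 ``unchanged'' to an instance that still contains $r_h$, which is outside the hypothesis of Lemma~\ref{TRatio1}.

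The example above also shows that the paper's premise, that the rest can be freely given to Agent~2, is not automatic for AW-derived plans. A repair that fits your own framework is to decouple the scales:
\begin{itemize}
\item round the $u_{i1}$ with $\epsilon' m_1/n$, where $m_1=\lceil u_{\max,1}/2\rceil$;
\item round the $u_{i2}$ and $p_i$ with $\epsilon' m_2/n$, where $m_2=\lceil\max\{u_{\max,2},p_{\max}\}/2\rceil$.
\end{itemize}
Since $q=0$ here, the errors are at most $\epsilon' W_1$ and $\epsilon' W_2$ respectively. The table stays polynomial, and the heavy resource needs no special treatment.
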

\begin{proof}
    We just need to make a simple case distinction:
\begin{itemize}
    \item Case 3.2: $r_i$ is allocated to Agent~1. Then, for all the remaining resources, allocate them to Agent~2 or sell them and set $q=1$.
    \item Case 3.3:  $r_i$ is sold, then create a new instance by setting $R=R\backslash\{r_i\}\cup \{r_i^\star\}$, $B=B-c_i$, where $u_1(r_i^\star)=u_2(r_i^\star)=c(r_i^\star)=0, p(r_i^\star)=p(r_i).$ It is straightforward to observe that selling $r_i^\star$ yields a strictly preferable (Pareto-dominant) outcome compared to allocating it---unless $p(r_i^\star) = 0$, in which case selling and allocating are equivalent. Our dynamic programming algorithm explicitly accounts for such situations.

\end{itemize}

The above steps can be repeated until the new instance no longer has such a resource or the $B$ is no longer positive. Theorem \ref{ageng2_knap}  tells us the problem of computing $O_2$ also has an FPTAS. Therefore, Cases 3.2 and 3.3 do not contradict the fact that this dynamic programming is an FPTAS.
     \end{proof}

Next, we show that the special case mentioned in Lemma \ref{TRatio2} (Case 3.2) and the knapsack problem can be mutually reduced via a polynomial-time many-one reduction. Consequently, this special case also admits an FPTAS, inherently included in the dynamic programming approach. Finally, we will analyze the time complexity to confirm that this method is indeed an FPTAS.
\begin{definition}[0/1-Knapsack]
   Given a set of resources \\$R=\{s_1,\ldots,s_n\}$, the 0/1-Knapsack Problem is defined as maximizing the total value $\sum\limits_{i=1}^{n} v_i x_i$ subject to the constraint $\sum\limits_{i=1}^{n} w_i x_i \leq W$, where $x_i \in \{0, 1\}$ for $i = 1, 2, \ldots, n$. Here, $v_i$ and $w_i$ are the value and the weight of the resource $s_i$ respectively, and $W$ is the knapsack's capacity.
\end{definition}

\begin{lemma}\label{EQUIP-KNAP}
 Given an instance of DSIRS $\mathcal{I}$ and a set of resources $S\subseteq R$ and $q=0$, we call the problem, to find an optimal solution for deciding selling or allocating to Agent~2 for each resource in $S$ to maximize the welfare for Agent~2, $O_2$ problem. $O_2$ problem has an FPTAS, because this problem is equivalent to the 0/1-Knapsack problem.
 \label{ageng2_knap}
\end{lemma}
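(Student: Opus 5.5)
We show that the $O_2$ problem and 0/1-Knapsack reduce to each other by polynomial-time, value-preserving transformations. The FPTAS for 0/1-Knapsack (profit scaling plus the pseudo-polynomial dynamic program) then carries over to $O_2$.

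First we fix what $O_2$ asks. With $q=0$, every resource of $S$ is either given to Agent~2 or sold, and all revenue goes to Agent~2. Thus a plan is a subset $S_0\subseteq S$ with $c(S_0)\le B$. Agent~2's welfare is
\[
u_2(S\setminus S_0)+p(S_0)=u_2(S)+\sum_{r\in S_0}\bigl(p(r)-u_2(r)\bigr).
\]
Since $u_2(S)$ is a constant, maximizing Agent~2's welfare is the same as maximizing $\sum_{r\in S_0}(p(r)-u_2(r))$ subject to $c(S_0)\le B$.

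Next we discard useless items. Any resource with $p(r)\le u_2(r)$ never helps when sold, so we fix it to Agent~2. Any resource with $c(r)>B$ can never be sold. The remaining set $S^*$ has strictly positive gains $v_r=p(r)-u_2(r)$.

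The reduction from $O_2$ to Knapsack takes items $S^*$, values $v_r$, weights $w_r=c(r)$, and capacity $W=B$. A feasible knapsack set is exactly a feasible $S_0$, and Agent~2's welfare equals $u_2(S)+{}$the knapsack value. For the converse reduction, given a Knapsack instance we create resources with $c(r)=w_r$, $u_2(r)=0$, $p(r)=v_r$, and $u_1(r)=v_r$. The choice of $u_1$ is only there to respect the DSIRS condition $p(r)\le\max\{u_1(r),u_2(r)\}$; $u_1$ plays no role in $O_2$. To restore the equal-total condition $\sum u_1=\sum u_2$, we add one dummy resource outside $S$ with $u_2$ equal to the total of the $v_r$, $u_1=0$, and $p=0$. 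The optimal $O_2$ value is then exactly the knapsack optimum. Both maps run in polynomial time.

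Finally we check that the approximation guarantee transfers. The main obstacle is that the additive constant $u_2(S)$ is nonnegative. Run the Knapsack FPTAS on $S^*$ and obtain a set with value $V\ge(1-\epsilon)V^*$. Then
\[
u_2(S)+V\ge(1-\epsilon)\bigl(u_2(S)+V^*\bigr),
\]
because adding a nonnegative constant to both sides only improves the ratio. This gives a $(1-\epsilon)$-approximation for $O_2$ in time polynomial in $|S|$ and $1/\epsilon$, which is what Lemma~\ref{TRatio2} needs.
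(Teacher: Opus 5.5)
Your proof is correct and takes the same route as the paper: reduce $O_2$ to 0/1-Knapsack and back, then invoke the standard Knapsack FPTAS. You are more careful than the paper in three places: your item value $p(r)-u_2(r)$ is the right one (the paper's $v_i=p_i$ does not preserve the objective once some $u_{i2}>0$), your converse instance is a valid DSIRS instance, and you check that the $(1-\epsilon)$ guarantee survives the additive shift by $u_2(S)$, the last two of which the paper leaves implicit.
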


{
\begin{proof}
    We prove this equivalence with two steps.
    \begin{itemize}
        \item (0/1-Knapsack $\Rightarrow$ $O_2$) This direction is trivial. For each resource $r_i$ with $v_i$ and $w_i$ of a 0/1-Knapsack $\mathcal{K}$, we create a resource $r_i$ with the revenue $p_i=v_i$ and the cost $c_i=w_i$ respectively in a $O_2$ instance $\mathcal{O}$. Additionally, we set $u_{i2}=0$ to each resource $r_i \in\mathcal{O}$ and  $B=W$.
        \item ($O_2$ $\Rightarrow$ 0/1-Knapsack) Given an $O_2$ instance $\mathcal{O}$, we ignore all resources with $u_{i2}\geq p_i$ in the first step, because it is always better to allocate them to Agent~2. In the second step, for each resource $r_{i}$ with $u_{i2}< p_i$ and the cost $c_i$ in $\mathcal{O}$, we create a resource $s_i$ with the value $v_i=p_i$ and the weight $c_i=w_i$ respectively a 0/1-Knapsack instance in $\mathcal{K}$ and set $W=B$.
    \end{itemize}
These reductions are clearly polynomially bounded. Since there is a well-known FPTAS of the 0/1-Knapsack with time complexity $\mathcal{O}(n^2\lfloor\frac{n}{\epsilon}\rfloor)$ \cite{vazirani2001approximation} and these two problems are equivalent, $O_2$ problem has an FPTAS.
     \end{proof}
}

(E) At the end, we analyze the time complexity.
\begin{lemma}\label{AWNS-FPTAS}
    The time complexity of this dynamic programming algorithm with scaling is polynomial in the input size and in $\frac{1}{\epsilon}$.
\end{lemma}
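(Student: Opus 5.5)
The plan is to bound the size of the dynamic programming table and the number of outer guesses separately. Each is polynomial in $n$ and $1/\epsilon$, and each table entry is filled in constant time apart from set bookkeeping.

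\textbf{Outer guesses.} The algorithm enumerates the following.
\begin{itemize}
\item The $O(n^2)$ positions of the indices $i^\blacktriangleleft, i^\blacktriangleright$.
\item The two transfer directions, given by the two recurrences.
\item The four cases coming from renaming the agents and from the two Phase~1 variants.
\item The three indices $j_0, j_1, j_2$ that realize $p_{\max}$, $u_{\max,1}$, $u_{\max,2}$, which fix $m_{\max}$ and the scaling. This gives $O(n^3)$ choices, since $m_{\max}$ is not known in advance.
\end{itemize}
In addition, the heavy-resource preprocessing of Lemma~\ref{TRatio2} is repeated at most once, because by the preceding lemma there is at most one $u_1$-heavy resource. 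It costs one call to the knapsack FPTAS of Lemma~\ref{ageng2_knap}, which runs in $\mathcal{O}(n^2\lfloor n/\epsilon\rfloor)$. So the outer enumeration contributes a polynomial factor $O(n^5)$ plus an additive knapsack term.

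\textbf{Table size after scaling.} Once the guess is fixed, every finite scaled value satisfies
\[
\hat{u}_{i1} \le \left\lceil \frac{u_{\max,1}\, n}{\epsilon' m_{\max}} \right\rceil \le \left\lceil \frac{2n}{\epsilon'} \right\rceil ,
\]
because $u_{\max,1}\le 2m_{\max}$ by the definition of $m_{\max}$. The same bound holds for $\hat u_{i2}$, $\hat p_i$ and $\hat p_{i+}$. Items whose value exceeds the guessed maxima receive $\pm\infty$ and are effectively forbidden from the corresponding set.

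Consequently the coordinates $\mathfrak{u}_1,\mathfrak{u}_2,\mathfrak{p}$ range over $\{0,\dots,n\lceil 2n/\epsilon'\rceil\}$, that is, $O(n^2/\epsilon')$ values each. The coordinates $\mathfrak{i},\mathfrak{o}$ range over $\{0,\dots,n\}$.

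Two small points need care:
\begin{itemize}
\item Because both $\hat p$ and $\hat p_+$ appear, I would either track $\mathfrak{p}$ twice, which is still polynomial, or note that $\hat p_+ - \hat p \le \mathfrak{o}$. The latter is what the index $\mathfrak{o}$ supports.
\item Since $\epsilon'=\epsilon/(\epsilon+2)$, we have $1/\epsilon' = 1+2/\epsilon$, so $1/\epsilon'$ is $O(1/\epsilon)$ for $\epsilon\le1$.
\end{itemize}
The table therefore has $O(n^2\cdot (n^2/\epsilon)^3)=O(n^8/\epsilon^3)$ entries, or $O(n^{10}/\epsilon^4)$ if $\hat p_+$ is tracked separately.

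\textbf{Work per entry and final extraction.} Each entry consults at most three predecessors. It compares costs according to Remark~\ref{red_restrict}, and it stores sets that can be represented by back-pointers, so each entry takes $O(1)$ time, or $O(n)$ if the sets are copied. At the end the algorithm scans all entries with $\mathfrak{i}=n$. For each, it computes the optimal $\hat q$ in closed form (as in Proposition~\ref{easysimult}), evaluates the ratio, discards entries violating constraint $(\bigstar)$ or having zero welfare, and keeps the best. This scan is linear in the table size.

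Multiplying the three components gives a total running time of roughly $O(n^{5}\cdot n^{10}/\epsilon^4 \cdot n)$ plus the knapsack term, which is polynomial in $n$, in $1/\epsilon$, and in the encoding length. The encoding length enters only through arithmetic on the original numbers during scaling.

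The main obstacle I expect is justifying that guessing $m_{\max}$ by item indices is legitimate: the scaling must not depend on the unknown optimum. The fix is to enumerate all triples, run the DP for each, and return the best result. This works because the optimum is captured by the correct guess, and every returned solution is evaluated on the original instance.
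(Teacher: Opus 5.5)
Your proposal is correct and takes essentially the same route as the paper. Both enumerate $j_0,j_1,j_2$ (which fixes $m_{\max}$), the positions $i^\blacktriangleleft,i^\blacktriangleright$ with both transfer directions, and the four role/Phase~1 cases, then bound the scaled DP table and multiply. Your explicit $O(n^2/\epsilon)$ bound on each value coordinate is more carefully justified than the paper's $O(l^3n^2/\epsilon^3)$ table count.

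The one inaccuracy is your claim that the heavy-resource step happens at most once. The at-most-one lemma relies on $u_1(R)=u_2(R)$, and this fails for the modified instances of Case~3.3, which Lemma~\ref{TRatio2} explicitly iterates. Moreover, testing heaviness exactly requires the knapsack value $O_2(R\setminus\{r_i\})$. The paper therefore charges an extra factor of $n$ for this step rather than a single knapsack call. This of course leaves the running time polynomial.
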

\begin{proof}

For each instance of AWNS-$\rho$, we enumerate all possible combinations of $j_1$, $j_2$, $j_0$, and $m_{\max}$, resulting in $\mathcal{O}(n^3)$ scaled instances for $\mathcal{I}$. For each scaled instance, the dynamic programming (DP) table is indexed by $[{\mathfrak{i}}, {\mathfrak{o}}, \hat{\mathfrak{u}}_1, \hat{\mathfrak{u}}_2, \hat{\mathfrak{p}}]$. Here, the five parameters ${\mathfrak{i}}, {\mathfrak{o}}, \hat{\mathfrak{u}}_1, \hat{\mathfrak{u}}_2, \hat{\mathfrak{p}}$ are all polynomially bounded by the input length $l$ and $\frac{1}{\epsilon}$, so the number of table entries is $\mathcal{O}\left(\frac{l^3 n^2}{\epsilon^3}\right)$.

At each table entry, we compute the current value of the ratio of the welfare of agents based on the solution corresponding to that cell. Since this calculation involves only a constant number of arithmetic operations, the time per computation is at most $\mathcal{O}(l \log l)$. Accounting for both role symmetry between Agent~1 and Agent~2 and the two possible initial allocations in Phase~1, the total complexity increases by at most a factor of four.

We then consider only solutions with $\rho = \frac{W_1}{W_2} \geq 1$ and $\mathfrak{c} \leq B$. Trying all combinations of $i^\blacktriangleleft$, $i^\blacktriangleright$, and both directions requires $\mathcal{O}(n^2)$ time, and handling a possible {$u_1$-heavy resource} in Lemma~\ref{TRatio1} and~\ref{TRatio2} adds a factor of $n$. Thus, the overall complexity is $\mathcal{O}\left(\frac{n^8 l^4 \log l}{\epsilon^3}\right)$ and the knapsack problem has the complexity $\mathcal{O}\left(n^2\left\lfloor\frac{n}{\epsilon}\right\rfloor\right)$~\cite{vazirani2001approximation}.

Combining these observations, we obtain an FPTAS for AWNS-$\rho$.      
     \end{proof}

\begin{remark}
From the dynamic programming table, it is possible to find multiple solutions that minimize $\rho$. Among all these optimal solutions, the algorithm chooses to return those that are not Pareto dominated by any other optimal solution---that is, no other solution gives both agents at least as much welfare and one of them strictly more.

\end{remark}
This concludes the proof.
     \end{proof}
}

\section{Simulations}

This section presents a simulation study designed to evaluate the DSIRS framework on real-world data from the Spliddit platform~\cite{goldman2015spliddit}, where each instance specifies the allocation of 1{,}000 tokens by $n$ agents across $m$ resources. The first 5{,}000 Spliddit instances containing between 4 and 15 resources and at least two agents were selected for analysis. For each instance, two agents were sampled at random in order to generate a two-player subproblem. Utilities were restricted to the sampled agents for solving the AWNS-$\rho$ problem. In contrast, resource prices (representing revenues from sales) were computed using the full population of agents from the original utility matrix, whereas resource-specific selling costs were derived solely from the sampled pair. This reflects a setting in which market prices are determined by population-level valuations, while the cost of selling corresponds to the private emotional or logistical burden borne by the specific agents involved.

Each resource’s selling cost and sale price were instantiated item-wise via one of six mode pairs
\texttt{(avg, avg)}, \texttt{(max, max)}, \texttt{(avg, max)}, \texttt{(max, avg)}, \texttt{(max, min)}, and \texttt{(avg, min)}.
For each resource $j\in[m]$, we set $c_j=\mathrm{op}_c(u_1(j),u_2(j))$ from the sampled pair and $p_j^{\mathrm{raw}}=\mathrm{op}_p(\{u_i(j)\}_{i\in N})$ from the full population, where $\mathrm{op}_c,\mathrm{op}_p\in\{\mathrm{avg},\max,\min\}$ follow the chosen mode.
Population-level prices may exceed the disputants' valuations, i.e., $p_j^{\mathrm{raw}}>\max\{u_1(j),u_2(j)\}$; we retain this to model external market pricing and stress-test AWNS-$\rho$. For each instance and mode combination, we solved the AWNS-$\rho$ problem using our FPTAS, 
where $\varepsilon$ was fixed at $0.1$, ensuring that solutions remain close to the optimal 
while maintaining computational scalability. 
To mitigate order effects, the algorithm was executed once under each agent ordering, 
and the better outcome was retained. 
For resources that were heavily dominated by one agent in terms of utility, explicit 
forced-allocation and forced-sale variants were also evaluated, with the best outcome 
selected among all considered alternatives. 
The random seed was fixed at $42$ to ensure reproducibility of the sampling procedure. 
This simulation empirically evaluates the performance of our FPTAS for AWNS-$\rho$ and examines how effectively it achieves near-equitable outcomes under different cost/price 
assumptions and budget limits. The simulation evaluates the \emph{equitability of the resulting allocations}, measured by the welfare ratio $\rho$ and the difference $d$ between the two allocations. These measures reflect the proximity of the computed allocation to an equitable outcome under each cost/price mode and budget.

\label{sec:simulation}

\begin{figure*}[ht]
    \centering

    \begin{subfigure}{\linewidth}
        \centering
        \includegraphics[width=1.0\linewidth]{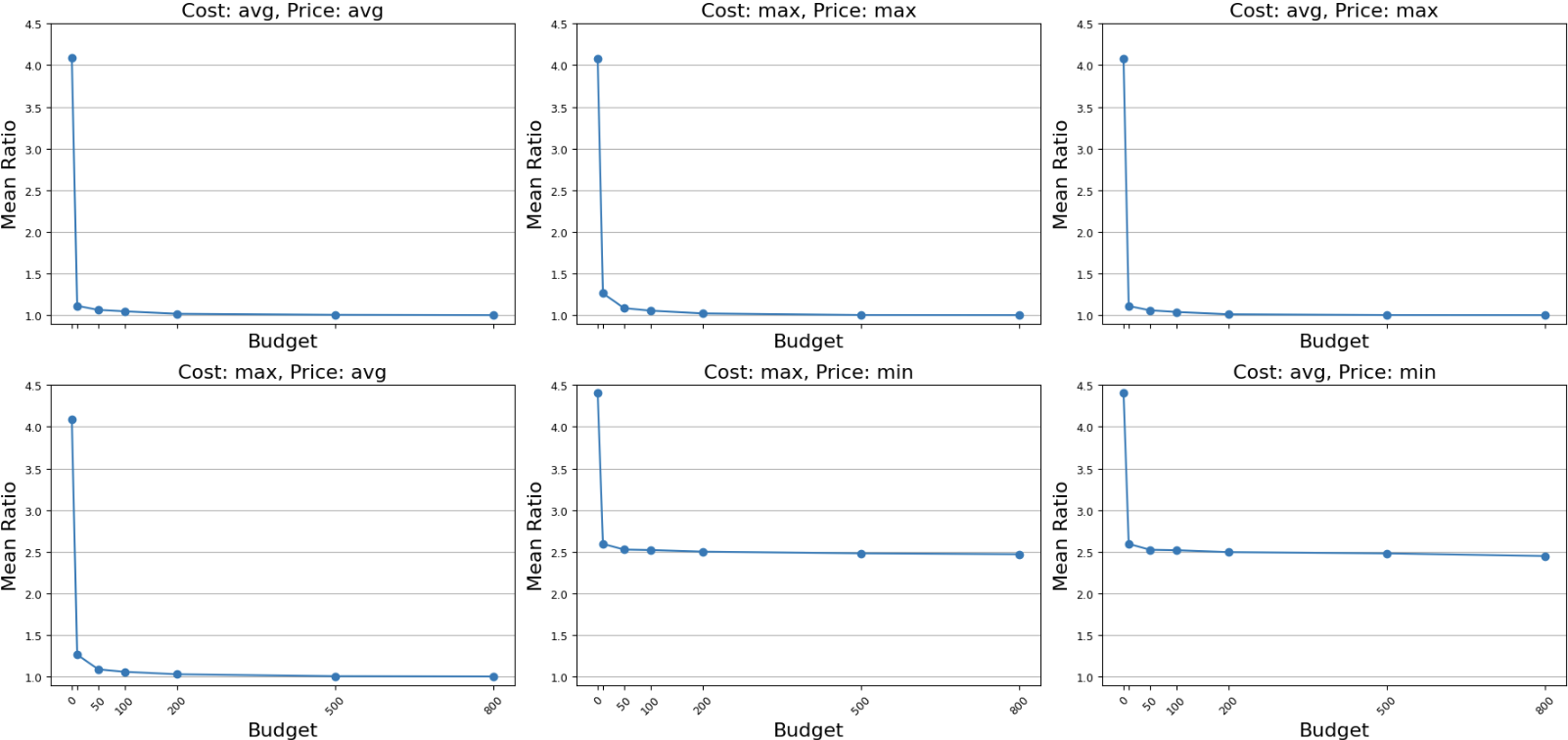}
        \caption{Mean $\rho=\max(\frac{W_1}{W_2},\frac{W_2}{W_1})$ as a function of the available budget, across six cost/price instantiations.}
        \label{fig:mean_ratio_vs_budget}
    \end{subfigure}

    \vspace{0.5cm}

    \begin{subfigure}{\linewidth}
        \centering
        \includegraphics[width=1.0\linewidth]{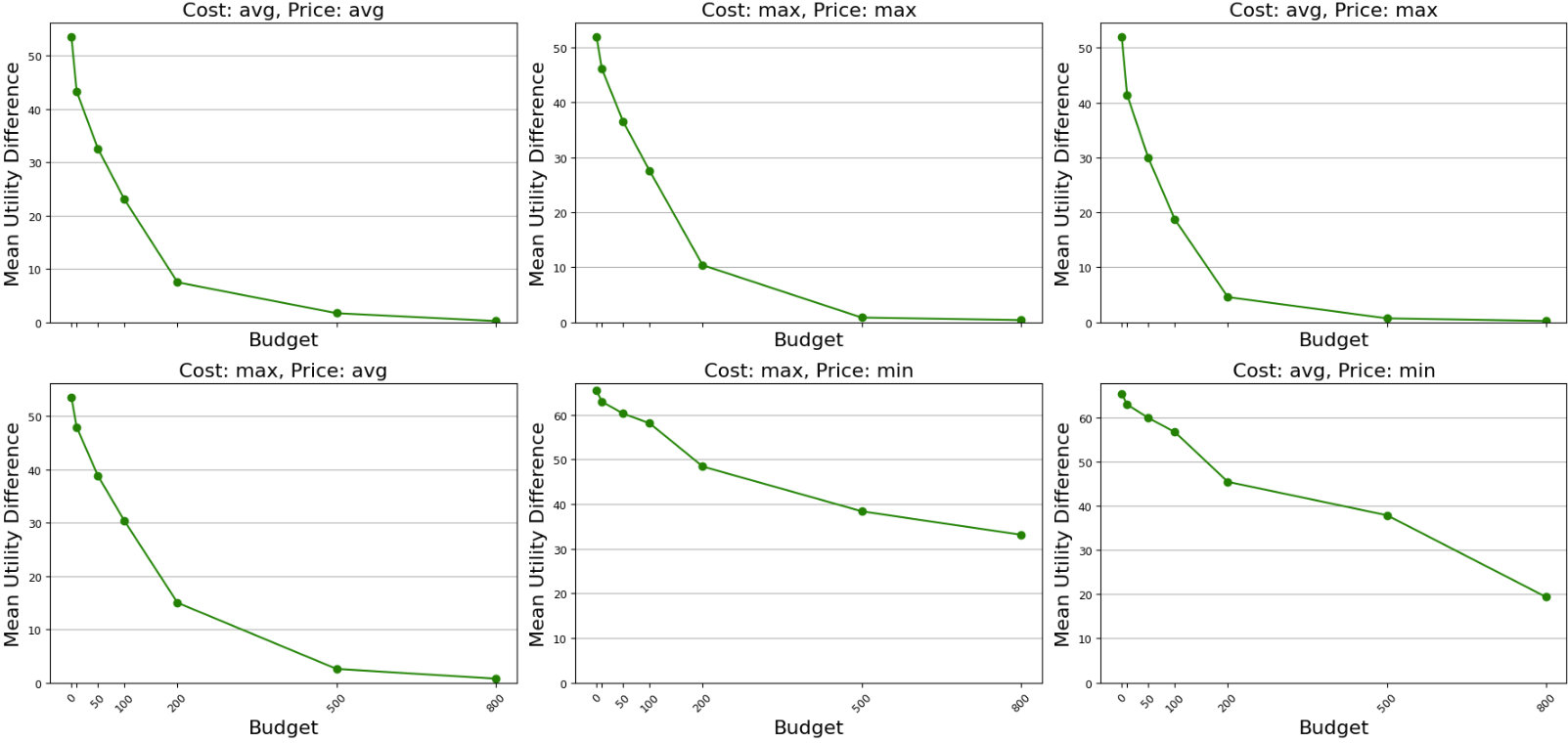}
        \caption{Mean $d=|W_1-W_2|$ as a function of the available budget, across six cost/price instantiations.}
        \label{fig:mean_difference_vs_budget}
    \end{subfigure}

\end{figure*}

Figure~\ref{fig:mean_ratio_vs_budget} reports the \emph{mean welfare ratio} $\rho$ for each mode as a function of the budget and Figure~\ref{fig:mean_difference_vs_budget} reports the \emph{mean absolute difference} between agent utilities $d$ for each mode as a function of the budget. The results reveal a pronounced difference in equitability across the modes. All modes exhibit sharp improvements in equitability as budget increases, particularly moving from no selling budget to any non-zero budget, where even modest budgets enable the algorithm to sell resources strategically and mitigate large imbalances. The most equitable outcomes are obtained under the average cost and average price
modes. In contrast, the two modes involving minimum price (average cost, minimum price and maximum cost and minimum price) yield significantly worse equitable outcomes. This effect is attributable to the limited redistribution power when prices are pessimistically estimated.

\section{Outlook}
In this paper, we addressed the fair allocation of indivisible resources between two agents within the DSIRS (Dispute Settlement with Indivisible Resources and Sale) framework. To avoid splitting, we proposed selling certain resources and redistributing the resulting revenue. We observed that the optimal fairness under different criteria may not be simultaneously achievable.

Building on the simplicity and transparency of the Adjusted Winner method, we formally defined a family of combinatorial problems that prohibit resource splitting. We showed that these problems are computationally intractable, with some variants even provably inapproximable. Nevertheless, we developed an FPTAS for the AWNS-$\rho$ problem and supported it with simulations, illustrating the impact of designing selling costs and revenue schemes.

Our findings offer both theoretical insights and practical tools for fair division through resource selling. We are particularly interested in exploring Pareto optimality in future work, that is, ensuring plans that are not Pareto-dominated---no agent’s welfare can be improved without making another worse off. Here, we point to several directions for future research:
\begin{enumerate}
    \item Analyzing structural properties of DSIRS instances, using theoretical tools and simulations (synthetically generated instances) of artificial data, inspired by the ``map of elections'' and ``putting fair division on the map''~\cite{szufa2020drawing,bohmputting}.
    \item Investigating whether selling “problematic” resources using each agent’s conversion rate between money and utility, representing how each agent internally values monetary compensation, followed by fair revenue redistribution, can reduce utility gaps while ensuring envy-freeness and Pareto optimality. While this variant is intractable, it is worth exploring whether our FPTAS can approximate utility ratios while preserving bounded EF.
    \item Studying a new fairness objective: maximizing the Nash product $W = W_1 \cdot W_2$, which encourages balance and guarantees Pareto optimality. This leads to the following variant:
\end{enumerate}
    \begin{definition}[\AWNSNW]
Given an instance of DSIRS~$\mathcal{I}$, find $S_0 \subseteq R$ such that the AW-derived plan $\mathcal{S} = P_{\text{AW}}(S_0, \mathcal{I})$ maximizes
\[
NW = W_1(\mathcal{S}, \mathcal{I}) \cdot W_2(\mathcal{S}, \mathcal{I}).
\]
\end{definition}






\bibliographystyle{unsrt}  
\bibliography{sampletrue}

@article{plaut2020almost,
  author       = {Benjamin Plaut and
                  Tim Roughgarden},
  title        = {Almost Envy-Freeness with General Valuations},
  journal      = {{SIAM} Journal on Discrete Mathematics
},
  volume       = {34},
  number       = {2},
  pages        = {1039--1068},
  year         = {2020}
}

@inproceedings{aleksandrovenvy,
  author       = {Martin Aleksandrov},
  title        = {Envy Freeness Up to One Item: Shall We Duplicate or Remove Resources?},
  booktitle   = {Progress in Artificial Intelligence - 21st Conference on Artificial Intelligence (EPIA'22) },
  series       = {Lecture Notes in Computer Science},
  volume       = {13566},
  pages        = {727--738},
  publisher    = {Springer},
  year         = {2022},
}

@inproceedings{amanatidis2022fair,
  author       = {Georgios Amanatidis and
                  Georgios Birmpas and
                  Aris Filos{-}Ratsikas and
                  Alexandros A. Voudouris},
  title        = {Fair Division of Indivisible Goods: {A} Survey},
  booktitle    = {Proceedings of the 31st International Joint Conference on Artificial Intelligence, (IJCAI' 22) },
  pages        = {5385--5393},
  year         = {2022},
publisher    = {ijcai.org},

}

@inproceedings{szufa2020drawing,
  author       = {Stanislaw Szufa and
                  Piotr Faliszewski and
                  Piotr Skowron and
                  Arkadii Slinko and
                  Nimrod Talmon},
  title        = {Drawing a Map of Elections in the Space of Statistical Cultures},
  booktitle    = {Proceedings of the 19th International Conference on Autonomous Agents and Multiagent Systems ({AAMAS} '20)},
  pages        = {1341--1349},
  year         = {2020}

}

@inproceedings{aziz2015adjusted,
  author       = {Haris Aziz and
                  Simina Br{\^{a}}nzei and
                  Aris Filos{-}Ratsikas and
                  S{\o}ren Kristoffer Stiil Frederiksen},
  title        = {The Adjusted Winner Procedure: Characterizations and Equilibria},
  booktitle    = {Proceedings of the 24th International Joint Conference on
                  Artificial Intelligence, (IJCAI '15)},
  pages        = {454--460},
  publisher    = {{AAAI} Press},
  year         = {2015},
}

@inproceedings{bohmputting,
  author       = {Paula B{\"{o}}hm and
                  Robert Bredereck and
                  Paul G{\"{o}}lz and
                  Andrzej Kaczmarczyk and
                  Stanislaw Szufa},
  title        = {Putting Fair Division on the Map},
  booktitle    = {Proceedings of the 40th {AAAI} Conference on Artificial Intelligence ({AAAI}'26)},
  publisher    = {{AAAI} Press},
  year         = {2026},
}

@book{brams1996fair,
  author       = {Steven J. Brams and
                  Alan D. Taylor},
  title        = {Fair division - from cake-cutting to dispute resolution},
  publisher    = {Cambridge University Press},
  year         = {1996},
}

@inproceedings{dupuis2009empirical,
  author       = {Nicolas Dupuis-Roy and Fr{\'{e}}d{\'{e}}ric Gosselin},
  title        = {An Empirical Evaluation of Fair-Division Algorithms},
  booktitle    = {Proceedings of the 31st Annual Meeting of the Cognitive Science Society},
  pages        = {31--36},
  publisher    = {Cognitive Science Society},
  address      = {Amsterdam, The Netherlands},
  year         = {2009},
  bibsource    = {Google Scholar}
}

@article{kilgour2024two,
  author       = {D. Marc Kilgour and Rudolf Vetschera},
  title        = {Two-Person Fair Division with Additive Valuations},
  journal      = {Group Decision and Negotiation},
  pages={745-774},
  volume={33},
  number={4},
  year={2024},
}

@incollection{pacuit2011towards,
  author       = {Eric Pacuit},
  editor       = {Johan van Benthem and
                  Amitabha Gupta and
                  Rohit Parikh},
  title        = {Towards a Logical Analysis of \emph{Adjusted Winner}},
  booktitle    = {Proof, Computation and Agency - Logic at the Crossroads},
  series       = {Synthese library},
  volume       = {352},
  pages        = {229--239},
  publisher    = {Springer},
  year         = {2011},

}

@article{goldman2015spliddit,
  title={Spliddit: Unleashing fair division algorithms},
  author={Goldman, Jonathan and Procaccia, Ariel D},
  journal={ACM SIGecom Exchanges},
  volume={13},
  number={2},
  pages={41--46},
  year={2015},
  publisher={ACM New York, NY, USA}
}

@inproceedings{bilo2024achieving,
  author       = {Vittorio Bil{\`{o}} and
                  Evangelos Markakis and
                  Cosimo Vinci},
  title        = {Achieving Envy-Freeness Through Items Sale},
  booktitle    = {32nd Annual European Symposium on Algorithms, {(ESA'24)}},
  series       = {LIPIcs},
  volume       = {308},
  pages        = {26:1--26:16},
  year         = {2024},
}

@inproceedings{aziz2016control,
  author       = {Haris Aziz and
                  Ildik{\'{o}} Schlotter and
                  Toby Walsh},
  title        = {Control of Fair Division},
  booktitle    = {Proceedings of the 25th International Joint Conference on
                  Artificial Intelligence, (IJCAI'16)},
  pages        = {67--73},
  publisher    = {{IJCAI/AAAI} Press},
  year         = {2016},
}

@article{folberg2009mediating,
  author       = {Jay Folberg},
  title        = {Mediating Family Property and Estate Conflicts},
  journal = {JAMS Dispute Resolution ALERT},
  year    = {2009},
  volume  = {9},
}

@article{aziz2022algorithmic,
  author       = {Haris Aziz and
                  Bo Li and
                  Herv{\'{e}} Moulin and
                  Xiaowei Wu},
  title        = {Algorithmic fair allocation of indivisible items: a survey and new
                  questions},
  journal      = {SIGecom Exch.},
  volume       = {20},
  number       = {1},
  pages        = {24--40},
  year         = {2022},
}

@article{massoud2000fair,
  author       = {Tansa George Massoud},
  title        = {Fair Division, Adjusted Winner Procedure (AW), and the Israeli-Palestinian Conflict},
  journal      = {Journal of Conflict Resolution},
  volume       = {44},
  number       = {3},
  pages        = {333--358},
  year         = {2000},
  publisher    = {Sage Publications, Inc., 2455 Teller Road, Thousand Oaks, CA 91320},
  bibsource    = {Google Scholar}
}

@article{brams1997fair,
  author       = {Steven J. Brams},
  title        = {Fair Division: A New Approach to the Spratly Islands Controversy},
  journal      = {International Negotiation},
  volume       = {2},
  number       = {2},
  pages        = {303--329},
  year         = {1997},
  publisher    = {Brill | Nijhoff, The Netherlands},
  bibsource    = {Google Scholar}
}

@inproceedings{bouveret2016characterizing,
  author       = {Sylvain Bouveret and
                  Michel Lema{\^{\i}}tre},
  title        = {Characterizing conflicts in fair division of indivisible goods using a scale of criteria},
  journal      = {Proceedings of the 15th International Conference on Autonomous Agents and MultiAgent Systems (AAMAS'16)},
  volume       = {30},
  number       = {2},
  pages        = {259--290},
  year         = {2016},
}

@incollection{faliszewski2016control,
  author       = {Piotr Faliszewski and
                  J{\"{o}}rg Rothe},
  editor       = {Felix Brandt and
                  Vincent Conitzer and
                  Ulle Endriss and
                  J{\'{e}}r{\^{o}}me Lang and
                  Ariel D. Procaccia},
  title        = {Control and Bribery in Voting},
  booktitle    = {Handbook of Computational Social Choice},
  pages        = {146--168},
  publisher    = {Cambridge University Press},
  year         = {2016},
}

@article{bartholdi1992hard,
  author       = {John J. Bartholdi III and Craig A. Tovey and Michael A. Trick},
  title        = {How Hard Is It to Control an Election?},
  journal      = {Mathematical and Computer Modelling},
  volume       = {16},
  number       = {8--9},
  pages        = {27--40},
  year         = {1992},
  publisher    = {Elsevier},
}

@article{salame2005some,
  author       = {Samer Salame and Eric Pacuit and Rohit Parikh},
  title        = {Some Results on Adjusted Winner},
  journal      = {Synthese},
  volume       = {142},
  number       = {1},
  pages        = {1--30},
  year         = {2005},
  publisher    = {Springer},
  bibsource    = {Google Scholar},
}

@book{garey1979computers,
  author       = {Michael R. Garey and David S. Johnson},
  title        = {Computers and Intractability: A Guide to the Theory of {NP}-Completeness},
  year         = {1979},
  publisher    = {W. H. Freeman and Company},
}

@book{vazirani2001approximation,
  author       = {Vijay V. Vazirani},
  title        = {Approximation Algorithms},
  year         = {2001},
  publisher    = {Springer}
}

@book{brandt2016handbook,
  editor       = {Felix Brandt and
                  Vincent Conitzer and
                  Ulle Endriss and
                  J{\'{e}}r{\^{o}}me Lang and
                  Ariel D. Procaccia},
  title        = {Handbook of Computational Social Choice},
  publisher    = {Cambridge University Press},
  year         = {2016},
}

@inproceedings{HalpernS19,
  author       = {Daniel Halpern and
                  Nisarg Shah},
  title        = {Fair Division with Subsidy},
  booktitle    = {Proceedings of 12th International Symposium, (SAGT'19)
                 },
  series       = {Lecture Notes in Computer Science},
  volume       = {11801},
  pages        = {374--389},
  publisher    = {Springer},
  year         = {2019},
}

@article{ChakrabortyISZ21,
  author       = {Mithun Chakraborty and
                  Ayumi Igarashi and
                  Warut Suksompong and
                  Yair Zick},
  title        = {Weighted Envy-freeness in Indivisible Item Allocation},
  journal      = {{ACM} Transactions on Economics and Computation},
  volume       = {9},
  number       = {3},
  pages        = {18:1--18:39},
  year         = {2021},
}

@inproceedings{BrustleDNSV20,
  author       = {Johannes Brustle and
                  Jack Dippel and
                  Vishnu V. Narayan and
                  Mashbat Suzuki and
                  Adrian Vetta},
  title        = {One Dollar Each Eliminates Envy},
  booktitle    = {Proceedings of the 21st {ACM} Conference on Economics and Computation (EC'20)},
  pages        = {23--39},
  publisher    = {{ACM}},
  year         = {2020},
}

@article{ChaudhuryKMS21,
  author       = {Bhaskar Ray Chaudhury and
                  Telikepalli Kavitha and
                  Kurt Mehlhorn and
                  Alkmini Sgouritsa},
  title        = {A Little Charity Guarantees Almost Envy-Freeness},
  journal      = {{SIAM} J. Comput.},
  volume       = {50},
  number       = {4},
  pages        = {1336--1358},
  year         = {2021},
}

@inproceedings{CaragiannisI21,
  author       = {Ioannis Caragiannis and
                  Stavros Ioannidis},
  title        = {Computing Envy-Freeable Allocations with Limited Subsidies},
  booktitle    = {Proceedings of 17th International Conference on Web and Internet Economics  (WINE '21)},
  series       = {Lecture Notes in Computer Science},
  volume       = {13112},
  pages        = {522--539},
  publisher    = {Springer},
  year         = {2021},
}



\end{document}